\documentclass[11pt,reqno]{amsart}

\usepackage{amssymb,amsmath,amsthm,graphics,epsfig}
\usepackage{amsfonts}
\usepackage{graphicx,cite,caption}
\usepackage{amssymb,tabularx}
\usepackage{lineno}
\usepackage{etoolbox}

\usepackage{multirow} 
\usepackage{array} 
\usepackage{graphicx} 
\usepackage{caption} 
\usepackage[table]{xcolor} 
\definecolor{lightgrayrow}{RGB}{240,240,240}

\usepackage{makecell}

\makeatletter
\pretocmd{\caption}{\linenomath}{}{}
\apptocmd{\caption}{\endlinenomath}{}{}
\makeatother
\usepackage{mathrsfs}
\usepackage[colorlinks=true, pdfstartview=FitV, linkcolor=blue, citecolor=blue, urlcolor=blue]{hyperref}
\usepackage{booktabs}
\usepackage{graphicx}
\usepackage{tikz}
\usepackage{subfigure}
\usepackage{cases}
\usepackage{booktabs}
\usepackage{caption}
\usepackage{orcidlink}
\usepackage{comment} 
\usepackage{arydshln}
\usepackage[colorinlistoftodos]{todonotes}
\usepackage[normalem]{ulem}
\usepackage{color}
\usepackage{amsaddr} %direcciones adelante
\usepackage[margin=2.7cm]{geometry}

\newtheorem{Theorem}{Theorem}
\newtheorem{Proposition}{Proposition}
\newtheorem{Corollary}{Corollary}
\usepackage{float}
\theoremstyle{definition}

\newtheorem{Remark}{Remark}

\definecolor{ColorEdward}{rgb}{0.4,0.6,0.7}

\usepackage{interval}

\date{\today}

\begin{document}

%\title{Simple quasispecies models with time lags in RNA replicase-mediated complementation}

\title[Quasispecies Models with Delayed Replicase Production and Complementation]%{Quasispecies Models with Polymerase-Mediated Replication: Delay-Induced Extinction via Basin Reorganization}
{Polymerase-mediated quasispecies dynamics: bifurcations, complementation, and robustness of {\Large $\tau$}-tipping}
\author[E.A. Turner, F. Crespo, S.F. Elena, N. Morales, J. Sardanyés]{{Edward A. Turner$^{1*}$\orcidlink{0000-0002-2959-9227}, Francisco Crespo$^{2}$\orcidlink{0000-0002-5930-8523},   Nolbert Morales$^{5}$\orcidlink{0009-0009-8114-1537}, \\ 
Santiago F. Elena$^{3,4,7}$\orcidlink{000-0001-8249-5593},
%$^{6,7,8,9}$\orcidlink{0000-0003-4395-9708} 
\MakeLowercase{and} Josep Sardanyés$^{6,7}$\orcidlink{0000-0001-7225-5158}}}

\thanks{$^{*}$Corresponding author: \texttt{edward.turner@uvm.cl}}

\address{\footnotesize
$^{1}$Facultad de Ciencias Jurídicas, Sociales y de la Educación, Universidad Viña del Mar, Viña del Mar, Chile\\
$^{2}$Embry-Riddle Aeronautical University, Aerospace Engineering, Daytona Beach, FL 32114, USA\\
$^{3}$Institute for Integrative Systems Biology (I2SysBio), CSIC-Universitat de València,\\ Av. Catedrático Agustín E. 9, Paterna, 46980 València, Spain\\
$^{4}$The Santa Fe Institute, 1399 Hyde Park Road, Santa Fe, New Mexico 87501, USA\\
$^{5}$Facultad de Ingeniería, Universidad San Sebastián, Lago Panguipulli 1390, Puerto Montt, Chile\\
$^{6}$Centre de Recerca Matemàtica (CRM), Edifici C, Campus de Bellaterra,\\ 08193 Cerdanyola del Vallès, Barcelona, Spain\\
$^{7}$Dynamical Systems and Computational Virology, CSIC Associated Unit I2SysBio-CRM, Edifici C, Campus de Bellaterra, 08193 Cerdanyola del Vallès, Barcelona, Spain\\
}

\subjclass{92D25, 92D20, 34K13.}
	
\keywords{Error catastrophe, Lethality, Quasispecies, polymerase-mediated replication, complementation, bifurcations, delay differential equations}

%\linenumbers
\maketitle

\begin{abstract}
Quasispecies theory describes how mutation and selection shape highly variable RNA virus populations, but most models explicitly consider neither RNA-dependent RNA polymerases nor the delays associated with their synthesis and functional activation. A recent minimal model showed that delayed polymerase availability can induce extinction by reorganizing basins of attraction, a mechanism termed $\tau$-tipping. Here, we extend this framework to a delay differential equation model in which master and mutant genomes encode distinct functional polymerases. We characterize its equilibrium and bifurcation structure, identifying master-mutant coexistence, mutant-only persistence (error catastrophe), and complete extinction governed by transcritical and saddle-node bifurcations. Although delays leave the equilibria unchanged, they reorganize their basins of attraction and can redirect populations that would otherwise persist at fixed mutation and replication parameters toward extinction, thereby extending $\tau$-tipping to systems with autonomous mutant replication. We also recover the previously studied complementation model as a limiting case in which defective genomes depend on master-derived polymerase, providing a comprehensive analysis of the bifurcation structure. Comparing the two models shows that the ability of mutant genomes to encode functional polymerases determines whether loss of the master sequence results in mutant replacement or complete population extinction. The occurrence of delay-induced extinction in both settings demonstrates the robustness of $\tau$-tipping and connects intracellular replication kinetics, complementation, and quasispecies extinction. Finally, we translate this mechanism into concrete virological predictions and propose experimental strategies to test the effects of replicase timing, RNA degradation, and functional complementation on viral persistence.
\end{abstract}

%\newpage
%\tableofcontents

\section{Introduction}
RNA virus evolution is usually described in terms of mutation, selection, and competition among genome variants \cite{Eigen1988,Domingo1985,Sobrino1983,Ojosnegros2013,Mas2004,Sole2006}. Yet viral genomes do not replicate autonomously: their amplification depends on viral and host proteins whose production, processing, localization, and assembly introduce finite intracellular timescales. Consequently, two viral populations with the same genome composition and replication parameters may differ in their ability to establish infection if functional replication complexes become available at different times. How such intracellular timing affects the persistence and extinction of a viral quasispecies remains largely unresolved. RNA viruses exhibit exceptionally high mutation rates, primarily due to the limited copying fidelity of their RNA-dependent RNA polymerases (RdRp)~\cite{Drake1999,Sanjuan2010}. Consequently, rather than consisting of identical genomes, viral populations are composed of closely related variants that collectively evolve as a complex genetic entity known as a quasispecies \cite{Eigen1988,Domingo1985,Sobrino1983,Ojosnegros2013,Mas2004,Sole2006}. The quasispecies framework, introduced by Eigen and Schuster \cite{Eigen1971,Eigen1988}, provides a theoretical approach to understand the evolutionary dynamics of these mutationally rich populations, highlighting the interplay between mutation and selection. Originally conceptualized for describing the evolution of biological information in prebiotic evolution, early quasispecies models assumed deterministic dynamics of sequences and competition, thereby introducing fitness-dependent selection into population genetics. Later on, quasispecies theory has found extensive applications in other systems of replicons undergoing large mutation rates such as RNA viruses~\cite{Domingo1978,Domingo1985,Viralcuasi}, bacteria and prions~\cite{Ojosnegros2013}, or cancer cell populations~\cite{Sole2003,Sole2004,Brumera2006}. 

Over the past decades, a multitude of quasispecies models have been developed to address a variety of key biological processes relevant to virology, especially focusing on their impact on quasispecies diversity and in both information and extinction thresholds. For instance, stochasticity, spatial dynamics, different modes of replication, and epistasis, to cite a few (see~\cite{Sole2021,Sardanyes2024} for recent reviews). Despite these advances, certain aspects remain insufficiently explored, such as the explicit modeling of the production and maturation of viral proteins responsible for genome amplification. While some models have included both viral structural and non-structural proteins as state variables~\cite{Dahari2006,Sardanyes2009,Llopis2025,zitzmann2023}, the explicit consideration of time lags in the synthesis and activation of these proteins has been largely neglected in the framework of quasispecies~\cite{TurnerPRE}. Hence, the impact of lagged intracellular processing of viral proteins, particularly on critical evolutionary phenomena such as the error threshold or lethal mutagenesis, remains poorly studied.

A hallmark prediction of quasispecies theory is the so-called error threshold, a critical mutation rate beyond which the master (wildtype, wt) sequence loses viability and the population becomes dominated by the mutant specturm~\cite{Eigen1971,Eigen1988,Biebricher2005,Bull2005}. Furthermore, the high mutation rates characteristic of RNA viruses promote the frequent generation of defective viral genomes (DVGs), i.e., truncated or extensively mutated genomes that cannot replicate or establish infection autonomously and instead rely on functional proteins supplied in trans by co-infecting replication-competent viruses through complementation~\cite{DVG}. When packaged into viral particles, these defective genomes form defective interfering particles, which can alter infection dynamics by competing with viable viruses for replication resources, thereby affecting viral fitness and pathogenicity~\cite{DVG}.

Despite the analytical power of quasispecies theory, it often overlooks fundamental temporal features of the viral replication cycle. In particular, the intrinsic time lags arising from processes such as replicase enzyme synthesis, maturation, and genome replication are seldom modeled, even though they are known to shape the dynamics of infection. The interval times between genome availability and the onset of productive RNA replication is experimentally measurable and can range from minutes to hours, depending on the virus and experimental system~\cite{Barton1993,Binder2013}. These temporal lags are especially prominent in positive-sense single-stranded RNA viruses. Functional RNA replication commonly requires more than synthesis of the catalytic polymerase subunit. Depending on the virus, productive replication may require polyprotein cleavage, interaction with viral or host cofactors, membrane recruitment, and assembly of replication organelles. These processes create a finite interval between expression of replication proteins and productive RNA synthesis. In coronaviruses, for example, nsp12 functions within a multicomponent replication-transcription complex associated with virus-induced membranes, illustrating how polymerase availability can be temporally separated from productive genome amplification. SARS-CoV-2 quasispecies~\cite{JARY20201560,Sun2021}, where the RdRp (nsp12) is initially translated as part of a polyprotein (pp1ab), requiring subsequent proteolytic cleavage by viral proteases (PLpro and 3CLpro), and assembly with essential cofactors (nsp7 and nsp8) to form the active replication-transcription complex (RTC) within virus-induced double-membrane vesicles. Each stage, from folding and cofactor binding to membrane recruitment and conformational activation, introduces a distinct delay between RdRp synthesis and its first engagement in RNA replication~\cite{Kovski2021}. While such time lags are not always directly quantified in experimental studies, their existence and functional importance are increasingly recognized in molecular virology and structural biology~\cite{brown2013,smith2011}.

From a modeling perspective, the inclusion of time lags to represent such intracellular lags is both biologically motivated and mathematically necessary for realistic descriptions of viral dynamics. Classical models have employed delay differential equations (DDEs) to capture lags in intracellular processes, including the time between viral entry and virion production, or between gene expression and protein functionality~\cite{Sazonov2020}. For example, Mittler et al. (1998) analyzed a model for the interaction of human immunodeficiency virus type 1 (HIV-1) with target cells including a time lag between initial infection and the formation of productively infected cells ~\cite{Mittler1998}. Similar approaches have been used to study delayed gene expression in bacteriophage systems and lytic virus cycles~\cite{Culshaw2003}. Recently, time lags and periodic fluctuations were included in quasispecies models only considering an  abstract representation of viral genomes~\cite{Turner2025,Turner2026Quasi,QuasiGeneral2026}. However, the explicit modeling of time lags in the synthesis and functional maturation of viral polymerases remains largely absent from the quasispecies context. This gap leaves open the question of how lagged RdRp kinetics may influence quasispecies evolutionary features.

Recently, Turner et al.~\cite{TurnerPRE} introduced a minimal three-variable quasispecies model in which master genomes produce a shared RdRp, while defective mutant genomes relied on this polymerase through functional complementation. They showed that increasing the delay before the polymerase becomes functional can redirect trajectories from persistence to extinction by reorganizing the basins of attraction, without altering the equilibria or their local stability. This delay-induced transition, whose underlying dynamical mechanism was further characterized using a reduced two-variable model, was termed \(\tau\)-tipping~\cite{TurnerPRE}. However, the reduced model did not describe mutant genomes encoding their own functional polymerases and therefore could not address how polymerase autonomy, differential polymerase activity, and genome-polymerase preference modify the equilibrium and bifurcation structure.

Here, we extend that framework by introducing and analyzing a four-variable quasispecies model in which the master and mutant genome classes produce distinct functional RdRps. The model incorporates mutation, differential replication, genome-polymerase template preference, molecular degradation, and delays associated with RdRp synthesis, maturation, and activation. We first derive the equilibrium and bifurcation structure of the non-delayed system and determine the conditions for master-mutant coexistence, mutant-only persistence, and complete extinction. We then investigate whether the \(\tau\)-tipping mechanism identified in the reduced model [35] persists when both genome classes are capable of producing functional polymerases. Finally, we recover the complementation model as a limiting case describing defective genomes lacking functional polymerase genes and compare the bifurcation structures of the two models. This comparison allows us to determine how the functional autonomy of the mutant spectrum controls whether master-sequence loss produces mutant replacement or extinction of the complete viral population.

\section{Mathematical models and methods}
In this section, we introduce a mathematical model given by four coupled delay differential equations (DDEs). This model is based on the classical quasispecies framework originally proposed by Eigen and Schuster~\cite{Eigen1971,Eigen1988}, considering a master sequence (wildtype, wt) and the pool of mutants represented by an average mutant sequence (i.e., single-peak fitness landscape~\cite{Swetina1982,Sole2006}). Our approach, as a difference from the original model, assumes a non-constant population of genomes and incorporates polymerase-mediated replication, degradation rates of the genomes and the RdRps, along with time lags in the synthesis of functional RdRps.
The synthesis of both the genomes and the polymerases is assumed to be bounded and slowed down by logistic functions with carrying capacities equal to one. These functions introduce intra- and inter-specific competition between the viral genomes and proteins due to a finite availability of e.g., nucleotides and aminoacids inside the host cell, respectively.
A similar model without time lags in RdRp synthesis and constant populations was considered in Ref.~\cite{Sardanyes2010}.
Our model, as in ~\cite{Sardanyes2010}, includes a parameter ($\gamma$) determining the relative template-recognition or replication-allocation for the RdRps. That is, for $\gamma = 0.5$ both the master and the mutant genomes have the same affinity for the RdRps. Parameter $\gamma$ phenomenologically summarizes a number of biological properties such as \emph{cis}-acting replication signals, RNA secondary structures, template accessibility, intracellular colocalization, or preferential recruitment into replication complexes.

\subsection{Full model} We first introduce the principal extension of the minimal framework analyzed in Ref.~\cite{TurnerPRE}. Whereas the model in Ref.~\cite{TurnerPRE} contains a single polymerase produced by the master genome and shared with polymerase-defective mutants, the present model allows both the master and mutant genome classes to encode their own functional polymerases. The state variables are defined as follows: the master sequence ($x_0$), an average mutant sequences ($x_1$), and the master ($p_0$) and mutant  ($p_1$) RdRps. The model is given by the next system of DDEs 
\begin{equation}\small
\begin{split}\label{eq:fullmodel}
\frac{dx_0(t)}{dt}&= (1 - \mu)\, \gamma  \, x_0(t) \Big[r_0 p_0(t-\tau_0) + r_1 p_1(t-\tau_1)\Big] \, \Big(1 - x_0(t) - x_1(t)\Big) - \varepsilon\,  x_0(t),\\
\frac{dx_1(t)}{dt}&= \left(\mu \, \gamma \, x_0(t) + (1-\gamma) x_1(t) \right) \Big[r_0 p_0(t-\tau_0) + r_1 p_1(t-\tau_1)\Big] \Big(1 - x_0(t) - x_1(t)\Big) - \varepsilon \, x_1(t),\\ 
\frac{dp_0(t)}{dt}&= k_0\, x_0(t) \Big( 1 - p_0(t)-p_1(t)\Big) - \varepsilon_p \, p_0(t), \\
\frac{dp_1(t)}{dt}&= k_1 \, x_1(t) \Big( 1 - p_0(t)-p_1(t)\Big) - \varepsilon_p \, p_1(t).
\end{split}
\end{equation}
The parameter $r_i$ (with $i=0,1$) denotes the replication rate of population $i$, while $\mu$ is the mutation rate from the master sequence to the mutant, with backward mutations neglected. The parameter $\gamma$ denotes the fraction of RdRps (wild-type and mutant) that bind and replicate the master genomes, whereas the remaining fraction, $1-\gamma$, replicates the mutant genomes. This provides an explicit affinity parameter between genomes and RdRps~\cite{Sardanyes2010}. Furthermore, $k_i$ is the RdRp translation rate, $\varepsilon_p$ the polymerase degradation rate, and $\varepsilon$ the genome degradation rate. The delays $\tau_0$ and $\tau_1$ represent the synthesis, maturation, and activation times required for RdRp molecules to become functional. That is, the interval between polymerase production and its participation in genome replication. We assume these processes occur on slower timescales than RNA synthesis. Importantly, incorporating explicit delays is not equivalent to simply reducing the replication rates. Whereas lower replication rates only decrease the instantaneous amplification efficiency, delays introduce non-locality and memory into the dynamics because replication at time $t$ depends on the system state at earlier times $t-\tau_i$. Consequently, the delayed model is infinite-dimensional, since its evolution depends on the history over $[t-\tau_i,t]$ and cannot, in general, be reduced to a finite-dimensional system.

Our model incorporates mutational effects both at the level of the genome replication rates, $r_0$ and $r_1$, and at the level of the affinities of the two genome classes for RdRp binding. From a molecular perspective, these effects capture distinct biological mechanisms. Mutations may alter the catalytic efficiency, processivity, or replication speed of the RdRps, thereby affecting the replication rates of the corresponding genomes. At the same time, mutations can modify cis-acting RNA elements, secondary structures, or recognition motifs involved in polymerase recruitment and binding to viral RNA templates, thus changing genome affinities for replication. When $\gamma=0.5$, replicases interact equally with master and mutant genomes, so mutational effects act only through replicase activity. For $\gamma>0.5$, the affinity of the master genomes for the RdRp is larger than for the mutants, whereas $\gamma<0.5$ favors mutant-genome replication. The model thus retains a simplified quasispecies structure in which selective differences arise from both fidelity of replication and efficiencies due to genome-polymerase affinities.

As mentioned, we consider $r_0>r_1$ and set $r_0=1$, so that the wild-type RdRp transcribes RNA at a higher rate than the mutant RdRp. Consequently, genomes encoding the wild-type polymerase possess higher effective fitness than mutant genomes, providing the analog of a single-peak selective hierarchy. For simplicity, we set $\varepsilon_p=\varepsilon$, assuming that viral genomes and RdRp molecules are degraded on comparable intracellular timescales. This reduction avoids introducing an additional independent decay parameter and allows us to focus on the effects of mutation, genome--polymerase affinity, and delayed RdRp functionality. This assumption does not qualitatively change the mechanisms under study, since both $\varepsilon$ and $\varepsilon_p$ play analogous roles as loss terms for genomes and polymerases, respectively. Unless otherwise specified, we also set $\tau_0=\tau_1=\tau$, corresponding to the case in which wild-type and mutant RdRps undergo similar synthesis, maturation, and activation delays.

\subsection{Model with defective viral genomes}
To compare the full model with the minimal \(\tau\)-tipping framework introduced in Ref.~\cite{TurnerPRE}, we consider the limiting case in which mutant genomes retain the cis-acting signals required for recognition and replication but cannot produce a functional polymerase. These defective genomes therefore depend on polymerase supplied in trans by replication-competent master genomes. Under these assumptions, the four-variable model reduces to the three-variable complementation model analyzed in Ref. [35], subject to the parameter simplifications specified below. This model is given by:
\begin{equation}\small
\begin{split}\label{eq:DIviruses}
 \frac{dx_0(t)}{dt}&= r_0p_0(t-\tau) (1 - \mu) \gamma x_0(t) \Big(1 - x_0(t) - x_1(t)\Big) - \varepsilon\,  x_0(t),\\
 \frac{dx_1(t)}{dt}&= r_0p_0(t-\tau) \Big[ \mu \,\gamma\, x_0(t) + (1-\gamma) \,x_1(t) \Big]  \Big(1 - x_0(t) - x_1(t)\Big) - \varepsilon x_1(t),\\
 \frac{dp_0(t)}{dt}&= k_0 \, x_0(t) \left[1 - p_0(t)\right] - \varepsilon_p \, p_0(t).
\end{split}
\end{equation}
In Section~\ref{sec:complementation} we analyze system~\eqref{eq:DIviruses} under the simplifying assumption $\varepsilon_p=\varepsilon$, implying comparable degradation timescales for viral genomes and polymerases. As mentioned, this choice reduces the number of independent parameters while preserving the relevant competition between replication, degradation, and delayed polymerase availability.

\subsection{Numerical methods}
The numerical integration of the time-delayed system is performed using the adaptive Runge--Kutta method of Dormand--Prince (order 4/5) with automatic step-size control.

\section{Results and discussion}
\subsection{Dynamics of the full model without time lags}\label{sec:3.1}
We first consider the model~\eqref{eq:fullmodel} setting $\tau = 0$. We study the equilibrium points and their stability and also describe the invariant regions in the phase space. We also prove no periodic orbits exist.
\subsubsection{Domain and equilibrium points}\label{sec:3.1.1}
In many biological models, population growth is often constrained by a competition term for limited resources, restricting the growth and confining the system's dynamics within a bounded domain. In model \eqref{eq:fullmodel}, this limited region is represented by
\begin{equation}\nonumber
\mathcal D=\left\lbrace \mathbf x\in \mathbb R^4:x_0,x_1,p_0,p_1\geq0\, \wedge\, x_0+x_1\leq1\,\wedge\,p_0+p_1\leq1\right\rbrace,
\end{equation}
where $\mathbf x=(x_0,x_1,p_0,p_1).$ The coordinate space $x_0 = 0$ is invariant under the flow, and since the vector field $\eqref{eq:fullmodel}$ is negative on each boundary of $\mathcal{D}$, the manifold $\mathcal{D}$ remains invariant. Consequently, for any choice of parameters, the system stays within the biologically relevant region for all $t \geq 0$. 

\begin{Proposition}\label{prop:1R}
Let $\mathcal{A} = \left\{\mathbf{x} \in \mathcal{D} : \gamma\mu x_0 + \left(1 + \gamma(\mu - 2)\right) x_1 = 0 \right\}$. Then, the set $\mathcal{A}$ is invariant under the flow of system~\eqref{eq:fullmodel}. 
\end{Proposition}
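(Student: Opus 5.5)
The plan is to show directly that the linear function defining $\mathcal A$ satisfies a scalar linear differential equation along solutions, so that its zero level set cannot be entered or left. Set
\[
L(\mathbf x)=\gamma\mu\,x_0+\beta\,x_1,\qquad \beta:=1+\gamma(\mu-2),
\]
together with the abbreviations $R(t)=r_0p_0(t)+r_1p_1(t)$ and $S(t)=1-x_0(t)-x_1(t)$. Here we work with $\tau=0$, as in this subsection. Then $\mathcal A=\{\mathbf x\in\mathcal D: L(\mathbf x)=0\}$.

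The key step is to differentiate $L$ along a solution of~\eqref{eq:fullmodel}. Both genome equations share the factor $R\,S$ and the loss term $-\varepsilon x_i$, which gives
\[
\dot L = R\,S\Big[\gamma\mu(1-\mu)\gamma\,x_0+\beta\big(\mu\gamma\,x_0+(1-\gamma)x_1\big)\Big]-\varepsilon L .
\]
The coefficient of $x_0$ inside the bracket is $\gamma\mu\big(\gamma(1-\mu)+\beta\big)$. Since $\gamma-\gamma\mu+1+\gamma\mu-2\gamma=1-\gamma$, this coefficient equals $\gamma\mu(1-\gamma)$. The coefficient of $x_1$ is $\beta(1-\gamma)$. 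Hence the bracket equals exactly $(1-\gamma)L$, and we obtain
\[
\dot L=\big[(1-\gamma)R(t)S(t)-\varepsilon\big]\,L .
\]
This cancellation is precisely why the particular constant $\beta=1+\gamma(\mu-2)$ appears in the statement, and verifying it is the only real computation.

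To conclude, note that along any solution in $\mathcal D$ the coefficient $a(t)=(1-\gamma)R(t)S(t)-\varepsilon$ is continuous and bounded. The scalar linear equation then integrates to
\[
L(\mathbf x(t))=L(\mathbf x(0))\exp\!\Big(\int_0^t a(s)\,ds\Big).
\]
Thus $L(\mathbf x(0))=0$ implies $L(\mathbf x(t))=0$ for all $t$, both forward and backward as long as the solution exists. Combined with the forward invariance of $\mathcal D$ established just before the statement, this shows that $\mathcal A$ is invariant.

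I expect no genuine obstacle; the only delicate point is the algebraic identity above. It is worth remarking what $\mathcal A$ looks like. When $\beta>0$ and $\gamma\mu>0$, nonnegativity forces $\mathcal A=\{x_0=x_1=0\}$, where invariance is also evident directly from the equations. When $\beta<0$, $\mathcal A$ is a genuine hyperplane slice of $\mathcal D$, and the computation above is what is needed. The same argument appears to carry over verbatim to $\tau>0$ with $R(t)$ replaced by $r_0p_0(t-\tau_0)+r_1p_1(t-\tau_1)$, since that factor is still common to both genome equations.
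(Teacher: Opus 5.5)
Your proof is correct and is essentially the paper's argument. The paper's one-line proof defers to a direct computation, which it carries out explicitly in the delayed version (Proposition~\ref{prop:1}): it derives the same identity $\dot\Lambda=\bigl[(1-\gamma)R(t)(1-X(t))-\varepsilon\bigr]\Lambda$ via $\gamma(1-\mu)+b=1-\gamma$, integrates it to the same exponential formula, and recovers the non-delayed statement in the subsequent Remark, just as you anticipate; the only cosmetic issue is that $\beta$ already denotes $k_1r_1\gamma\mu$ in the paper, so another symbol (the paper uses $b$) would avoid a clash.
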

\begin{proof}
A straightforward computation shows that the defining condition of the set $\mathcal{A}$ is preserved along the trajectories of system~\eqref{eq:fullmodel}.
\end{proof}

Consider the 6-dimensional parameter space
$$\Omega= \left\lbrace \omega\in\mathbb R^6: k_0,k_1,\gamma,\mu,r_1\in(0,1) \wedge 0<\varepsilon\ll1\right\rbrace$$
and the following functions: 
\begin{equation}
\begin{split}\label{eq:simpParmetros}
\alpha(\omega)&=k_0(1+\gamma(\mu-2)),\quad \ \beta(\omega)=k_1r_1\gamma\mu,\\
\kappa(\omega)&=k_1r_1(\gamma-1),\quad \ \ \qquad
\delta(\omega)=\gamma(\mu-1).
\end{split}
\end{equation}

For simplicity, let us denote $\alpha(\omega)$, $\beta(\omega)$, $\kappa(\omega)$, and $\delta(\omega)$ as $\alpha$, $\beta$, $\kappa$ and $\delta$.  The system $\eqref{eq:fullmodel}$ has five equilibria, denoted as $\mathbf x_i^*=(x_{0i},x_{1i},p_{0i},p_{1i})$. The firs equilibrium $\mathbf x^*_1 = (0, 0, 0, 0),$ represents the complete extinction of the system, i.e., lethality. The second and third equilibria, $\mathbf x_{2,3}^*,$ are given by
\begin{equation}
\begin{split}\nonumber
\mathbf x_{2,3}^*&=\left(0,\frac{\kappa+k_1\varepsilon \mp\sqrt{(\kappa +k_1 \varepsilon )^2+4 \kappa  \varepsilon ^2}}{2 \kappa },0,\frac{\kappa-k_1 \varepsilon  \mp\sqrt{(\kappa +k_1 \varepsilon )^2+4 \kappa  \varepsilon ^2}}{2 (\kappa +(\gamma -1) r_1\varepsilon )}\right).
\end{split}
\end{equation}
At these equilibria, the master sequence is completely lost, and the population is only composed of mutants, i.e., the error catastrophe scenario. The fourth and fifth equilibria, are given by  $\mathbf x_{4,5}^*=(x_{0(4,5)},x_{1(4,5)},p_{0(4,5)},p_{1(4,5)}),$ with
\begin{equation}\small
\begin{split}\nonumber
x_{0(4,5)}&=\frac{\alpha  \left[-\alpha(\delta+\varepsilon)+\beta(\delta+\varepsilon/r_1)\mp\sqrt{(\delta(\alpha-\beta)+\varepsilon  (\alpha -\beta/r_1 ))^2-4 \delta \varepsilon ^2 (2 \gamma -1)(\alpha -\beta ) }\right]}{k_0  \delta (4 \gamma -2)(\alpha -\beta )},\\
x_{1(4,5)}&= \frac{\mu  \left[\alpha  (\delta +\varepsilon )- \beta (\delta - \varepsilon/r_1)\pm\sqrt{(\delta(\alpha-\beta)+\varepsilon  (\alpha -\beta/r_1 ))^2-4 \delta \varepsilon ^2 (2 \gamma -1)(\alpha -\beta ) }\right]}{(4 \gamma -2) (\mu -1) (\alpha -\beta )},\\
p_{0(4,5)}&= \frac{\alpha  \left[\alpha  (\delta -\varepsilon )-\beta  \left(\delta-\varepsilon/ r_1 \right)\pm\sqrt{(\delta(\alpha-\beta)+\varepsilon  (\alpha -\beta/r_1 ))^2-4 \delta \varepsilon ^2 (2 \gamma -1)(\alpha -\beta ) }\right]}{2 \delta (\alpha +(1-2 \gamma ) \varepsilon -\gamma  k_1 \mu )(\alpha -\beta )},\\
p_{1(4,5)}&=\frac{k_1 \mu  \left[-\alpha  (\delta -\varepsilon )+\beta  \left(\delta-\varepsilon/r_1 \right)\mp\sqrt{(\delta(\alpha-\beta)+\varepsilon  (\alpha -\beta/r_1 ))^2-4 \delta \varepsilon ^2 (2 \gamma -1)(\alpha -\beta ) }\right]}{2 (\mu -1) (\alpha +(1-2 \gamma ) \varepsilon -\gamma  k_1 \mu )(\alpha -\beta )}.
\end{split}
\end{equation}
Note that the upper sign in the numerators corresponds to the equilibrium $\mathbf{x}_4^*$, while the lower sign corresponds to $\mathbf{x}_5^*$. Both equilibria represent coexistence between the master sequence and its mutants.
%Furthermore, we recall that the equilibrium points $\mathbf{x}_1^*$, $\mathbf{x}_2^*$, and $\mathbf{x}_3^*$ lie on the invariant set $\mathcal{A}$.
In what follows, Proposition~\ref{prop:x1yx2} and Proposition~\ref{eq:eq4y5} establish that, for certain parameter values $\omega$, the equilibria lie within the biologically relevant region $\mathcal{D}$.

\begin{Proposition}\label{prop:x1yx2}
The equilibria $\mathbf{x}_2^*$ and $\mathbf{x}_3^*$ lie inside the bounded region $\mathcal{D}$ 
if one of the following cases holds:
\begin{enumerate}
\item $4r_1(1-\gamma)<1$, $k_1>4r_1(1-\gamma)$, and
$\varepsilon \leq \dfrac{\sqrt{k_1}\, r_1(1-\gamma)}{\sqrt{k_1}+2\sqrt{r_1(1-\gamma)}}.$
\item $4r_1(1-\gamma)>1$, $k_1<1$, and $\varepsilon \leq \dfrac{\sqrt{k_1}\, r_1(1-\gamma)}{\sqrt{k_1}+2\sqrt{r_1(1-\gamma)}}.$
\item $4r_1(1-\gamma)=k_1<1$ and $\varepsilon \leq \dfrac{r_1(1-\gamma)}{2}.$
\end{enumerate}
\end{Proposition}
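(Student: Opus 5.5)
The plan is to reduce the problem to the invariant face $x_0=p_0=0$ and study a single quadratic there. On $x_0=p_0=0$ the first and third equations of system~\eqref{eq:fullmodel} (with $\tau=0$) vanish identically. The equilibria $\mathbf x_{2,3}^*$ are then the nontrivial zeros of the planar system
\begin{equation}\nonumber
\dot x_1 = c\,x_1 p_1 (1-x_1)-\varepsilon x_1,\qquad \dot p_1 = k_1 x_1(1-p_1)-\varepsilon p_1,
\end{equation}
where I abbreviate $c=r_1(1-\gamma)$, so that $\kappa=-k_1c$. With $x_0=p_0=0$, membership in $\mathcal D$ reduces to $0\le x_1\le 1$ and $0\le p_1\le 1$.

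The steps, in order, are as follows.
\begin{enumerate}
\item Solve the second equation for $p_1$. This gives $p_1=k_1x_1/(k_1x_1+\varepsilon)$, which lies in $[0,1)$ whenever $x_1\ge 0$. So the $p_1$ constraint is automatic once $x_1\ge0$.
\item For $x_1\neq0$, substitute into $c\,p_1(1-x_1)=\varepsilon$. This gives the quadratic
\begin{equation}\nonumber
q(x_1)=c k_1 x_1^2-k_1(c-\varepsilon)x_1+\varepsilon^2=0 .
\end{equation}
Its roots are exactly the $x_1$-coordinates displayed for $\mathbf x_{2,3}^*$, after rewriting in terms of $\kappa$.
\item Show that real roots exist. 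The roots are real iff $k_1^2(c-\varepsilon)^2\ge 4ck_1\varepsilon^2$. Assuming $\varepsilon<c$, this is equivalent to $\sqrt{k_1}(c-\varepsilon)\ge 2\sqrt c\,\varepsilon$, that is,
\begin{equation}\nonumber
\varepsilon\le \frac{\sqrt{k_1}\,r_1(1-\gamma)}{\sqrt{k_1}+2\sqrt{r_1(1-\gamma)}} .
\end{equation}
This is precisely the bound appearing in cases (1) and (2). The bound itself forces $\varepsilon<c$, so the assumption is consistent.
\item Show that the roots lie in $(0,1)$. By Vieta, the product $\varepsilon^2/(ck_1)$ is positive and the sum $(c-\varepsilon)/c$ lies in $(0,1)$. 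Hence both roots lie in $(0,1)$, and then $p_1\in(0,1)$ by step 1.
\item Recover case (3). When $k_1=4c$, the bound simplifies: $\sqrt{k_1}=2\sqrt c$ gives $\dfrac{2\sqrt c\,c}{4\sqrt c}=c/2$. This is the threshold $r_1(1-\gamma)/2$ stated there.
\end{enumerate}

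The step I expect to be the main obstacle is reconciling this clean argument with the three-way split in the statement. The conditions $4r_1(1-\gamma)\lessgtr k_1$ and $k_1<1$ do not appear to be needed for the computation above. They presumably arise from bounding the paper's closed-form expressions directly, for instance in ensuring that the denominator $\kappa+(\gamma-1)r_1\varepsilon$ has a controlled sign, or in comparing the threshold with the constraints of $\Omega$. I would therefore present the argument above as showing that the discriminant bound alone suffices, so each listed case is a sufficient sub-case. I would then verify separately that, under each case's extra inequalities, the $p_1$-formula in the statement coincides with $k_1x_1/(k_1x_1+\varepsilon)$ evaluated at the corresponding root. This check is needed to match the $\mp$ signs correctly, and it is the most error-prone bookkeeping in the argument.
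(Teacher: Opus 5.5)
Your argument is correct, and it reaches the result by a cleaner route than the paper's. The paper works directly with the displayed closed forms. It treats the radicand $\mathcal N=k_1\bigl(k_1(c-\varepsilon)^2-4c\varepsilon^2\bigr)$, with your $c=r_1(1-\gamma)$, as a quadratic in $\varepsilon$ and factors it as $k_1(k_1-4c)\bigl(\varepsilon-\tfrac{c\sqrt{k_1}}{\sqrt{k_1}-2\sqrt c}\bigr)\bigl(\varepsilon-\tfrac{c\sqrt{k_1}}{\sqrt{k_1}+2\sqrt c}\bigr)$. It then splits according to the sign of the leading coefficient $k_1-4c$. That is where the three cases come from; the statement's auxiliary conditions ($4r_1(1-\gamma)\lessgtr 1$, $k_1<1$) serve only to fix this sign within $\Omega$. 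They do not come from the denominator $\kappa+(\gamma-1)r_1\varepsilon=-c(k_1+\varepsilon)$, which never vanishes. The paper then bounds each coordinate by replacing the radical with $k_1(c-\varepsilon)$, obtaining $x_1<(c-\varepsilon)/c$ and $p_1<k_1/(k_1+\varepsilon)$.

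Your observation that the $\varepsilon$-bound itself forces $\varepsilon<c$ lets you check the discriminant condition by taking square roots. This removes the sign analysis altogether and makes explicit that the single inequality $\varepsilon\le\sqrt{k_1}\,c/(\sqrt{k_1}+2\sqrt c)$ suffices. In particular, it also covers admissible parameters such as $k_1<4c\le 1$, which none of the three listed cases includes. Vieta's relations and the map $x_1\mapsto k_1x_1/(k_1x_1+\varepsilon)$ replace the coordinate-by-coordinate estimates and show why $p_1<1$ is automatic.

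Finally, the sign bookkeeping you flag is an algebraic identity, independent of the case hypotheses. Set $D=k_1^2(c-\varepsilon)^2-4ck_1\varepsilon^2$ and take $x_1=\bigl(k_1(c-\varepsilon)\pm\sqrt D\bigr)/(2ck_1)$. Rationalizing $p_1=\varepsilon/\bigl(c(1-x_1)\bigr)$ gives $p_1=\bigl(k_1(c+\varepsilon)\pm\sqrt D\bigr)/\bigl(2c(k_1+\varepsilon)\bigr)$ with the same sign, which is the displayed formula for $\mathbf x_{2,3}^*$.
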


\begin{figure}
\centerline{\includegraphics[width=\textwidth]{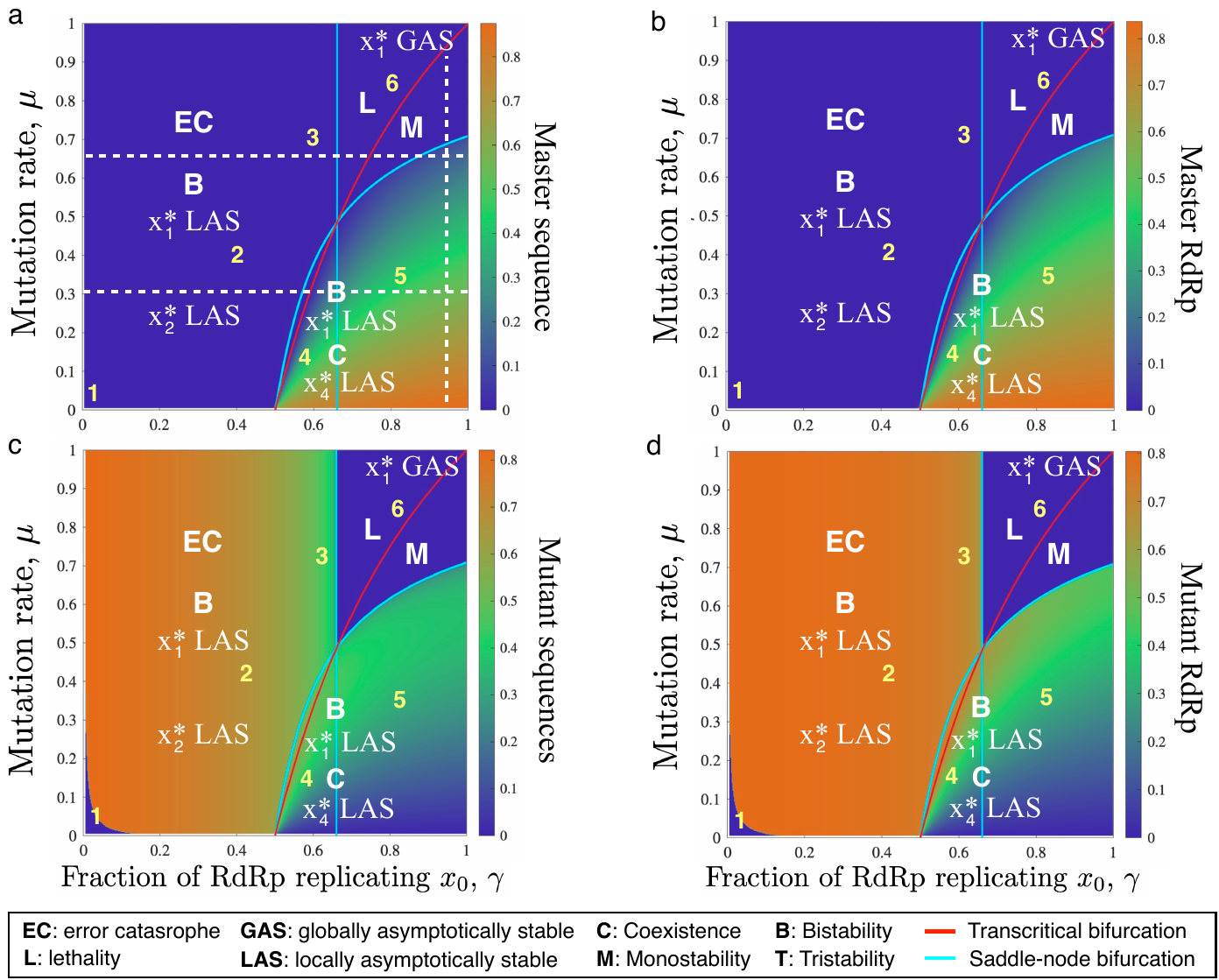}}
  \captionsetup{width=\linewidth}
\caption{Phase diagram in the $(\gamma,\mu)$ parameter space, computed numerically from Eqs.~\eqref{eq:fullmodel} for $\tau=0$. The saddle-node and transcritical bifurcation boundaries are overlaid in cyan and red, respectively. Distinct dynamical regimes corresponding to error catastrophe (EC) and lethality (L) are identified, together with the stable equilibrium states found in each region. In the bistable region (B), the equilibria $x^*_{1}$, $x^*_{2}$, and $x^*_{1}$, $x^*_{4}$ are locally asymptotically stable (LAS). Parameter values are $r_0=1$, $r_1=0.7$, $k_0=0.6$, $k_1=0.5$, and $\varepsilon=\varepsilon_p=0.1$; and initial conditions are \textcolor{black}{$x_0(0) = 0.2$, $x_1(0) = 0$, $p_0(0) = 0$, $p_1(0) = 0$}. The white dashed lines indicate parameter-space sections used to construct the bifurcation diagrams shown in the following figure. The numbers highlighted in yellow identify the different regions of the phase diagram for which the equilibrium points and their corresponding eigenvalues are characterized in Table~\ref{tab:eigenvalues}.}
\label{fig:Gradiente4D}		
\end{figure}

\begin{Proposition}\label{eq:eq4y5}
The equilibria $\mathbf{x}_4^*$ and $\mathbf{x}_5^*$ lie in the bounded region $\mathcal{D}$ provided that
$$k_1 r_1 \leq k_0, \qquad \mu \leq 2 - \frac{1}{\gamma},$$
and
$$\varepsilon \leq \dfrac{\gamma(1-\mu)\big(k_1\gamma\mu r_1-k_0(1-2\gamma+\gamma\mu)\big)}{k_1\gamma\mu -k_0(1-2\gamma+\gamma\mu)+2\sqrt{(1-2\gamma)\gamma(1-\mu)\big(k_0(1-2\gamma+\gamma\mu)-k_1 r_1 \gamma \mu\big)}}.$$
\end{Proposition}

\begin{figure}
\centerline{\includegraphics[width=\textwidth]{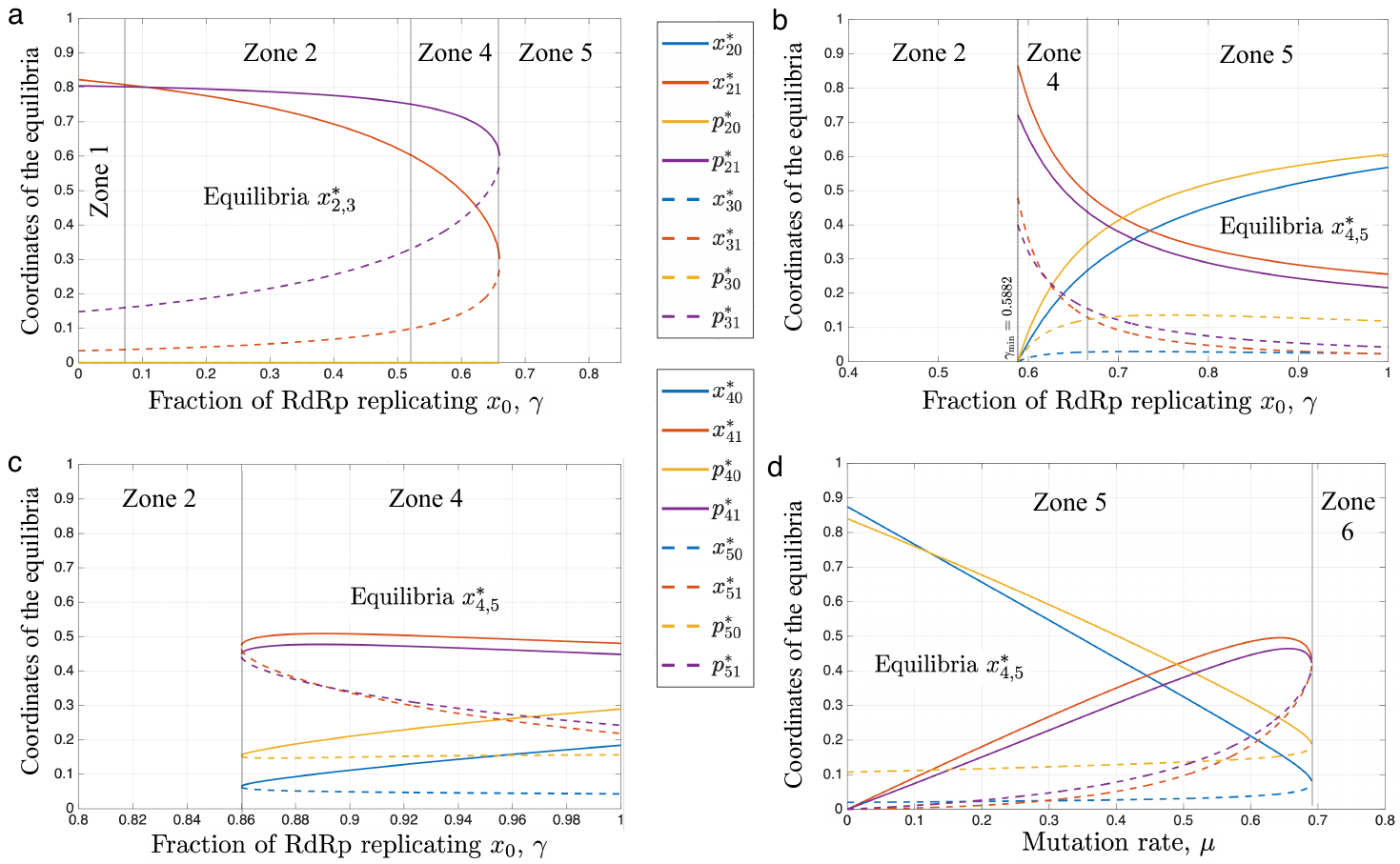}}
  \captionsetup{width=\linewidth}
\caption{Bifurcation diagrams for the equilibria $x^*_{2,3}$ and $x^*_{4,5}$. The parameter values and initial conditions are set as in the previous figure. Solid and dashed curves denote the coordinates of the two equilibrium branches indicated in each panel. Shaded regions correspond to the parameter-space zones identified in Fig.~\ref{fig:Gradiente4D}.
(a) Coordinates of the mutant-only equilibria $x_2^*$ and $x_3^*$ as functions of the genome–polymerase affinity parameter $\gamma$, for $\mu=0.01$, showing a saddle-node bifurcation between Zones 4 and 5.
(b) Coordinates of the coexistence equilibria $x_4^*$ and $x_5^*$ as functions of $\gamma$, for $\mu=0.3$, showing the transcritical transition into Zone 4, where master–mutant coexistence becomes biologically admissible.
(c) Coordinates of the coexistence equilibria $x_4^*$ and $x_5^*$ as functions of $\gamma$, for $\mu=0.65$, showing a saddle-node bifurcation from Zone 6 to Zone 5.
(d) Coordinates of $x_4^*$ and $x_5^*$ as functions of the mutation rate $\mu$, with $\gamma=0.95$ fixed, showing a saddle-node bifurcation separating Zone 5 from Zone 6.}
\label{fig:1D_bif}		
\end{figure}

%\edcom[inline]{observemos que el equilibrio $\mathbf{x}_4^*,\mathbf{x}_5^*$ se indetermina en 0.5 pero nunca está en T cuando ocurre.}

The previous propositions identify a critical threshold relationship between $\mu$ and $\gamma$ governing the persistence of the master genome. Such a dependence was already observed in the quasispecies framework by Sardanyés and Elena~\cite{Sardanyes2010}. Here, however, the inclusion of genetic degradation, quantified by $\varepsilon$, sharpens this condition by further restricting the parameter combinations for which the persistence equilibria are biologically admissible. In biological terms, higher mutation rates, weaker master genome--polymerase affinity, or stronger genome degradation all reduce the parameter region where coexistence remains feasible, thereby increasing the likelihood of error catastrophe and  viral extinction, i.e., lethality. The resulting feasible region is shown in Fig.~\ref{fig:Gradiente4D} (see next section).

\subsubsection{Local stability analysis of the equilibrium points}\label{sec:3.1.2}
In this section, we study analytically and numerically the local stability of the equilibria previously described whenever they lie within $\mathcal{D}$. The stability of the fixed point $x_1^*$ is determined from the linearized system
\[
\frac{d}{dt}\bigl(\mathbf{x}-\mathbf{x}_1^*\bigr)
=
DF(\mathbf{x}_1^*)\bigl(\mathbf{x}-\mathbf{x}_1^*\bigr).
\]

At this equilibrium, the Jacobian matrix is
\[
DF(\mathbf{x}_1^*)=
\begin{pmatrix}
-\varepsilon & 0 & 0 & 0\\
0 & -\varepsilon & 0 & 0\\
k_0 & 0 & -\varepsilon & 0\\
0 & k_1 & 0 & -\varepsilon
\end{pmatrix}.
\]
Since this matrix is lower triangular, its eigenvalues are given by its diagonal entries:
\[
\lambda_{11}=\lambda_{12}=\lambda_{13}=\lambda_{14}
=-\varepsilon<0.
\]
Therefore, $\mathbf{x}_1^*$ is locally asymptotically stable since $\varepsilon >0$ and, consequently, it is a sink independently of the choice of $\omega\in\Omega$.

The stability of the other four equilibria is computed numerically in the parameter space $(\gamma, \mu) \in [0,1]$. The phase diagrams in Fig.~\ref{fig:Gradiente4D} reveal a rich organization of the dynamics in this parameter space. At low values of $\gamma$, replication is preferentially directed toward mutant genomes and the master sequence cannot be maintained, giving rise to the error-catastrophe (EC) regime. Increasing $\gamma$ favors replication of the master sequence and allows the emergence of a coexistence equilibrium in which master and mutant genomes, together with their corresponding RdRps, persist. However, this transition is strongly modulated by the mutation rate: increasing $\mu$ progressively restricts the parameter region supporting coexistence and eventually drives the system toward either lower equilibria of master genomes or to complete extinction. Importantly, the phase diagram contains extended bistable zones (denoted as B), indicating that identical parameter values can lead to qualitatively different long-term outcomes depending on the initial population composition. The boundaries separating these regimes are organized by transcritical (red curves) and saddle-node (cyan curves) bifurcations, which respectively mediate exchanges between biologically relevant equilibrium branches and the creation or loss of equilibrium pairs. Thus, mutation and genome--polymerase affinity jointly determine not only which viral states are feasible, but also whether persistence or extinction is uniquely determined or depends on initial conditions. The full-extinction state, which is monostable (M), is found at large values of both $\gamma$ and $\mu$. This threshold can be crossed by increasing the mutation rate when RdRp affinity is strongly biased toward the master genome ($\gamma \to 1$). Interestingly, this result indicates that the selective advantage conferred to the master sequence by preferential access to the replication machinery may become detrimental at sufficiently high mutation rates. Under these conditions, enhanced replication of the master sequence also increases the production of mutants, ultimately promoting the collapse of the entire viral population.

This bifurcation structure is resolved in greater detail in Fig.~\ref{fig:1D_bif} and is consistent with the local stability analysis summarized in Table~\ref{tab:eigenvalues}. The one-parameter diagrams show how the equilibrium branches underlying the different regions of Fig.~\ref{fig:1D_bif} are connected. For low mutation ($\mu=0.01$), variation of $\gamma$ produces a saddle-node bifurcation of the mutant-only equilibria $\mathbf{x}^{*}_{2,3}$, whereas at intermediate mutation ($\mu=0.3$) the coexistence equilibria $\mathbf{x}^{*}_{4,5}$ become biologically admissible through a transcritical transition. At larger mutation ($\mu=0.65$), coexistence is instead bounded by a saddle-node bifurcation, a behavior that is also recovered when $\mu$ is varied at fixed high affinity of the master genomes ($\gamma=0.95$). The eigenvalues reported in Table~\ref{tab:eigenvalues} confirm the corresponding changes in local stability across these regions. In particular, the extinction equilibrium $\mathbf{x}^{*}_{1}$ remains locally asymptotically stable throughout parameter space, with all four eigenvalues equal to $-\varepsilon$, while the stability and biological admissibility of the mutant-only and coexistence equilibria change across the bifurcation boundaries. Taken together, Figs.~\ref{fig:Gradiente4D}--\ref{fig:1D_bif} and Table~\ref{tab:eigenvalues} show that the non-delayed model possesses a multistable landscape in which error catastrophe, coexistence, and lethality are connected through a well-defined sequence of local bifurcations. This underlying equilibrium structure provides the reference dynamical landscape for assessing how the introduction of delays in RdRp production and maturation subsequently reorganizes the basins of attraction without changing the equilibrium set (see Section~\ref{sec:3.2}).

This bifurcation structure extends the classical single-peak quasispecies picture in two ways. First, replication success depends not only on genome-level replication and mutation parameters but also on the production of a \emph{trans}-acting replicase. Second, extinction is locally stable over the entire parameter domain, generating bistability between extinction and positive viral states. Consequently, the mutation-affinity boundary does not by itself determine infection outcome. On one side of the classical threshold, mutant genomes may replace the master while preserving total viral replication; elsewhere, the same mutation pressure can contribute to population collapse. This distinction parallels the virological difference between loss of a dominant sequence and extinction of the complete mutant spectrum.

\begin{table}
\centering
\renewcommand{\arraystretch}{1.25}
\setlength{\tabcolsep}{4.5pt}

\resizebox{\textwidth}{!}{%
\begin{tabular}{|>{\centering\arraybackslash}m{0.8cm}|*{6}{>{\centering\arraybackslash}m{0.9cm}|>{\centering\arraybackslash}m{2.2cm}|}}
\hline
&
\multicolumn{2}{c|}{\textbf{\makecell{Zone 1\\ $\gamma=0.01,\ \mu=0.01$}}}
& \multicolumn{2}{c|}{\textbf{\makecell{Zone 2\\ $\gamma=0.4,\ \mu=0.4$}}}
& \multicolumn{2}{c|}{\textbf{\makecell{Zone 3\\ $\gamma=0.6,\ \mu=0.7$}}}
& \multicolumn{2}{c|}{\textbf{\makecell{Zone 4\\ $\gamma=0.6,\ \mu=0.1$}}}
& \multicolumn{2}{c|}{\textbf{\makecell{Zone 5\\ $\gamma=0.8,\ \mu=0.3$}}}
& \multicolumn{2}{c|}{\textbf{\makecell{Zone 6\\ $\gamma=0.95,\ \mu=0.8$}}} \\
\hline

\textbf{$x_1^*$}
& \makecell{$\in\mathcal{D}$\\ $\in\mathbb{R}^+$}
& \makecell{$\lambda_1=-0.1$\\ $\lambda_2=-0.1$\\ $\lambda_3=-0.1$\\ $\lambda_4=-0.1$}
& \makecell{$\in\mathcal{D}$\\ $\in\mathbb{R}^+$}
& \makecell{$\lambda_1=-0.1$\\ $\lambda_2=-0.1$\\ $\lambda_3=-0.1$\\ $\lambda_4=-0.1$}
& \makecell{$\in\mathcal{D}$\\ $\in\mathbb{R}^+$}
& \makecell{$\lambda_1=-0.1$\\ $\lambda_2=-0.1$\\ $\lambda_3=-0.1$\\ $\lambda_4=-0.1$}
& \makecell{$\in\mathcal{D}$\\ $\in\mathbb{R}^+$}
& \makecell{$\lambda_1=-0.1$\\ $\lambda_2=-0.1$\\ $\lambda_3=-0.1$\\ $\lambda_4=-0.1$}
& \makecell{$\in\mathcal{D}$\\ $\in\mathbb{R}^+$}
& \makecell{$\lambda_1=-0.1$\\ $\lambda_2=-0.1$\\ $\lambda_3=-0.1$\\ $\lambda_4=-0.1$}
& \makecell{$\in\mathcal{D}$\\ $\in\mathbb{R}^+$}
& \makecell{$\lambda_1=-0.1$\\ $\lambda_2=-0.1$\\ $\lambda_3=-0.1$\\ $\lambda_4=-0.1$} \\
\hline

\rowcolor{gray!10}
\textbf{$x_2^*$}
& \makecell{$\in\mathcal{D}$\\ $\in\mathbb{R}^+$}
& \makecell{$\lambda_1=-0.587$\\ $\lambda_2=-0.38$\\ $\lambda_3=-0.1$\\ $\lambda_4=-0.099$}
& \makecell{$\in\mathcal{D}$\\ $\in\mathbb{R}^+$}
& \makecell{$\lambda_1=-0.485$\\ $\lambda_2=-0.187$\\ $\lambda_3=-0.1$\\ $\lambda_4=-0.06$}
& \makecell{$\in\mathcal{D}$\\ $\in\mathbb{R}^+$}
& \makecell{$\lambda_1=-0.385$\\ $\lambda_2=-0.1$\\ $\lambda_3=-0.064$\\ $\lambda_4=-0.055$}
& \makecell{$\in\mathcal{D}$\\ $\in\mathbb{R}^+$}
& \makecell{$\lambda_1=-0.385$\\ $\lambda_2=-0.1$\\ $\lambda_3=-0.064$\\ \textcolor{blue}{$\lambda_4=0.035$}}
& \makecell{$\notin\mathcal{D}$\\ $\in\mathbb{C}$}
& \makecell{--\\--\\--\\--}
& \makecell{$\notin\mathcal{D}$\\ $\in\mathbb{C}$}
& \makecell{--\\--\\--\\--} \\
\hline

\textbf{$x_3^*$}
& \makecell{$\in\mathcal{D}$\\ $\in\mathbb{R}^+$}
& \makecell{$\lambda_1=-0.175$\\ $\lambda_2=-0.1$\\ $\lambda_3=-0.099$\\ \textcolor{blue}{$\lambda_4=0.054$}}
& \makecell{$\in\mathcal{D}$\\ $\in\mathbb{R}^+$}
& \makecell{$\lambda_1=-0.189$\\ $\lambda_2=-0.1$\\ $\lambda_3=-0.06$\\ \textcolor{blue}{$\lambda_4=0.047$}}
& \makecell{$\in\mathcal{D}$\\ $\in\mathbb{R}^+$}
& \makecell{$\lambda_1=-0.220$\\ $\lambda_2=-0.1$\\ $\lambda_3=-0.055$\\ \textcolor{blue}{$\lambda_4=0.032$}}
& \makecell{$\in\mathcal{D}$\\ $\in\mathbb{R}^+$}
& \makecell{$\lambda_1=-0.220$\\ $\lambda_2=-0.1$\\ \textcolor{blue}{$\lambda_3=0.035$}\\ \textcolor{blue}{$\lambda_4=0.032$}}
& \makecell{$\notin\mathcal{D}$\\ $\in\mathbb{C}$}
& \makecell{--\\--\\--\\--}
& \makecell{$\notin\mathcal{D}$\\ $\in\mathbb{C}$}
& \makecell{--\\--\\--\\--} \\
\hline

\rowcolor{gray!10}
\textbf{$x_4^*$}
& \makecell{$\notin\mathcal{D}$\\ $\in\mathbb{R}^-$}
& \makecell{--\\--\\--\\--}
& \makecell{$\notin\mathcal{D}$\\ $\in\mathbb{R}^-$}
& \makecell{--\\--\\--\\--}
& \makecell{$\notin\mathcal{D}$\\ $\in\mathbb{R}^-$}
& \makecell{\textcolor{blue}{$\lambda_1=0.184$}\\ \textcolor{blue}{$\lambda_2=0.122$}\\ $\lambda_3=-0.1$\\ $\lambda_4=-0.032$}
& \makecell{$\in\mathcal{D}$\\ $\in\mathbb{R}^+$}
& \makecell{$\lambda_1=-0.567$\\ $\lambda_2=-0.265$\\ $\lambda_3=-0.1$\\ $\lambda_4=-0.026$}
& \makecell{$\in\mathcal{D}$\\ $\in\mathbb{R}^+$}
& \makecell{$\lambda_1=-0.560$\\ $\lambda_2=-0.265$\\ $\lambda_3=-0.1$\\ $\lambda_4=-0.064$}
& \makecell{$\notin\mathcal{D}$\\ $\in\mathbb{C}$}
& \makecell{--\\--\\--\\--} \\
\hline

\textbf{$x_5^*$}
& \makecell{$\notin\mathcal{D}$\\ $\in\mathbb{R}^-$}
& \makecell{--\\--\\--\\--}
& \makecell{$\notin\mathcal{D}$\\ $\in\mathbb{R}^-$}
& \makecell{--\\--\\--\\--}
& \makecell{$\notin\mathcal{D}$\\ $\in\mathbb{R}^-$}
& \makecell{\textcolor{blue}{$\lambda_1=0.446$}\\ \textcolor{blue}{$\lambda_2=0.122$}\\ $\lambda_3=-0.1$\\ \textcolor{blue}{$\lambda_4=0.031$}}
& \makecell{$\in\mathcal{D}$\\ $\in\mathbb{R}^+$}
& \makecell{$\lambda_1=-0.183$\\ $\lambda_2=-0.1$\\ \textcolor{blue}{$\lambda_3=0.051$}\\ $\lambda_4=-0.026$}
& \makecell{$\in\mathcal{D}$\\ $\in\mathbb{R}^+$}
& \makecell{$\lambda_1=-0.183$\\ $\lambda_2=-0.1$\\ $\lambda_3=-0.064$\\ \textcolor{blue}{$\lambda_4=0.051$}}
& \makecell{$\notin\mathcal{D}$\\ $\in\mathbb{C}$}
& \makecell{--\\--\\--\\--} \\
\hline

\end{tabular}%
}
\captionsetup{width=\textwidth}
\caption{Eigenvalues associated with equilibria $x_i^*$, with  $i=1,\ldots,5$, for six representative zones of the parameter space $(\gamma,\mu)$, shown with yellow numbers in Fig.~\ref{fig:Gradiente4D}. For each equilibrium, the first entry indicates whether the equilibrium belongs to the biologically feasible region $\mathcal{D}$ and whether its coordinates are real and non-negative. The corresponding eigenvalues are listed in the adjacent column. Black values indicate eigenvalues with negative real part, whereas values in blue denote positive eigenvalues and therefore local instability.}
\label{tab:eigenvalues}
\end{table}

\subsubsection{Non-existence of periodic orbits}\label{sec:3.1.3}
Here, we show that the non-delayed system does not admit periodic solutions in the biologically feasible region $\mathcal{D}$. The proof is based on suitable ratios between the master and mutant populations and their corresponding polymerases. Periodicity forces these ratios to remain constant, thereby reducing any hypothetical periodic orbit to a planar subsystem. The Bendixson--Dulac criterion is then used to exclude periodic trajectories in both the interior and the relevant invariant boundary of $\mathcal{D}$.

\begin{Proposition}\label{prop:no-periodic-orbits}
Assume that $\tau_0=\tau_1=0$, $\varepsilon_p=\varepsilon$, and $0<\mu,\gamma<1$, $r_0,r_1,k_0,k_1,\varepsilon>0$. Then system~\eqref{eq:fullmodel} has no nonconstant periodic orbits in $\mathcal D$.
\end{Proposition}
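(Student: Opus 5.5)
The plan is to use monotone quantities built from ratios to force any periodic orbit onto a lower-dimensional invariant set, and then finish with the Bendixson--Dulac criterion in the plane. Write $R=r_0p_0+r_1p_1$, $L=1-x_0-x_1$, $M=1-p_0-p_1$, all nonnegative on $\mathcal D$, and set $c=1+\gamma(\mu-2)$.

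\textbf{Step 1: the invariant set $\mathcal A$.} Suppose a nonconstant periodic orbit has $x_0>0$. For $u=x_1/x_0$ a direct computation gives $\dot u=(\mu\gamma+cu)RL$. Hence $w=\mu\gamma+cu$ satisfies $\dot w=c\,w\,RL$, and similarly $\dot u=\mu\gamma RL$ when $c=0$.
\begin{itemize}
\item If $c=0$, periodicity forces $\int_0^T RL\,dt=0$.
\item If $c\neq0$ and $w\neq0$ on the orbit, then $w(T)=w(0)\exp\bigl(c\int_0^T RL\,dt\bigr)$, which again forces $\int_0^T RL\,dt=0$.
\end{itemize}
In either case $RL\equiv0$ on the orbit. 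Then $\dot x_0=-\varepsilon x_0$, which contradicts periodicity with $x_0>0$. So necessarily $w\equiv0$, i.e.\ the orbit lies in the invariant set $\mathcal A$ of Proposition~\ref{prop:1R}. There $x_1=a x_0$ with $a=-\mu\gamma/c$, and nonnegativity requires $c<0$.

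\textbf{Step 2: the polymerase ratio.} Next I treat the polymerase ratio $v=p_1/p_0$. First, $p_0>0$ on the orbit: at $p_0=0$ we have $\dot p_0=k_0x_0M$, so $p_0$ becomes positive unless $M=0$, and the case $M=0$ can be handled separately. On $\mathcal A$ one computes $\dot v=(Mx_0/p_0)(k_1a-k_0v)$. So $z=k_1a-k_0v$ obeys $\dot z=-k_0(Mx_0/p_0)z$. As before, periodicity gives either $Mx_0\equiv0$ or $z\equiv0$.
\begin{itemize}
\item If $Mx_0\equiv0$, then $M\equiv 0$ and $\dot M=-\varepsilon(p_0+p_1)<0$, a contradiction.
\item Hence $p_1=b\,p_0$ with $b=k_1a/k_0$.
\end{itemize}

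\textbf{Step 3: the planar reduction.} The orbit now lies in the planar system
\[
\dot x_0=C x_0p_0\bigl(1-(1+a)x_0\bigr)-\varepsilon x_0,\qquad \dot p_0=k_0x_0\bigl(1-(1+b)p_0\bigr)-\varepsilon p_0,
\]
where $C=(1-\mu)\gamma(r_0+r_1b)>0$. I apply the Dulac function $B=1/(x_0p_0)$. The $x_0$-derivative of $B\dot x_0$ is $-C(1+a)<0$, and the $p_0$-derivative of $B\dot p_0$ is $-k_0x_0/p_0^2-\ldots<0$; more precisely, $B\dot p_0=k_0(1-(1+b)p_0)/p_0-\varepsilon/x_0$ has derivative $-k_0/p_0^2$. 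The divergence of $B(\dot x_0,\dot p_0)$ is therefore strictly negative on the open quadrant, which excludes periodic orbits there.

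\textbf{Step 4: the boundary $x_0=0$.} On this face $\dot p_0=-\varepsilon p_0$, so periodicity forces $p_0\equiv0$. The remaining $(x_1,p_1)$ system has the same structure as in Step 3, and the same Dulac function $1/(x_1p_1)$ gives negative divergence. The degenerate faces where $x_1=0$ or $p_1=0$ force decay to the origin, so they carry no orbits either.

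The main obstacle is Step 2. The $v$-equation only closes after restricting to $\mathcal A$, so Step 1 must be done first. The degenerate cases $M\equiv 0$, $p_0=0$, and $c\ge0$ also need care so that every alternative genuinely contradicts periodicity.
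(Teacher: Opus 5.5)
Your proposal is correct and follows essentially the paper's route: the exponential identity for $\mu\gamma+c\,x_1/x_0$ forces the orbit into $\mathcal A$, the analogous identity forces $p_1/p_0$ to be constant, the resulting planar system is excluded by the Dulac function $1/(x_0p_0)$, and the face $x_0=0$ is handled separately with $1/(x_1p_1)$. Only cosmetic repairs are needed:
\begin{itemize}
\item If $M\equiv0$, then $\dot M=+\varepsilon(p_0+p_1)=\varepsilon$ rather than $-\varepsilon(p_0+p_1)$; this still contradicts $M\equiv0$.
\item The case $M=0$ that you defer is settled as in the paper: $P=p_0+p_1$ cannot attain its maximum value $1$ on a periodic orbit, because $\dot P=-\varepsilon$ there.
\item On the face $x_0=0$, the set $\{p_1=0\}$ is not invariant and nothing ``decays to the origin'' there. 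A periodic orbit with $x_1>0$ simply avoids it, because $\dot p_1=k_1x_1>0$ when $p_1=0$.
\end{itemize}
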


\begin{proof}
Suppose, by contradiction, that $(x_0(t),x_1(t),p_0(t),p_1(t))$ is a nonconstant $T$-periodic solution contained in $\mathcal D$. We first consider the case $x_0\not\equiv0$. Since the hyperplane is invariant. Set
\begin{equation}\nonumber
X=x_0+x_1,\qquad P=p_0+p_1,\qquad R=r_0p_0+r_1p_1.
\end{equation}
On the boundary $X=1$, $\dot X=-\varepsilon<0,$ whereas on $P=1$, $\dot P=-\varepsilon<0.$ Since $X$ and $P$ are periodic and satisfy $X,P\leq1$, neither can attain the value $1$. Hence,
 $X(t)<1$ and $P(t)<1$ for all $t.$ We also have
$$\dot p_0=k_0x_0(1-P)-\varepsilon p_0.$$
If $p_0(t_0)=0$ for some $t_0$, then \(t_0\) is a minimum of the nonnegative periodic function $p_0$, but
$$\dot p_0(t_0)=k_0x_0(t_0)(1-P(t_0))>0,$$ a contradiction. Therefore, $p_0(t)>0$ for all $t$, and consequently \(R(t)>0\).
Consider
\[
q(t)=\frac{x_1(t)}{x_0(t)}.
\]
A direct computation gives $\dot q = R(1-X)\left[\gamma\mu+cq\right],$ with $c=1-2\gamma+\gamma\mu.$ If $c\geq0$, then $\dot q>0$, contradicting the periodicity of $q$. Thus, a periodic solution could only exist if $c<0$. Define
$$q_*=-\frac{\gamma\mu}{c}>0.$$
Then, $ d/dt(q-q_*)=cR(1-X)(q-q_*).$ Hence,
$$
q(T)-q_*=
\bigl(q(0)-q_*\bigr)
\exp\left(
c\int_0^T R(s)(1-X(s))\,ds
\right).
$$
The exponential factor is strictly smaller than one. Since $q(T)=q(0)$, it follows that $q(t)\equiv q_*,$ and therefore $x_1(t)=q_*x_0(t).$ Next, let
$$z(t)=\frac{p_1(t)}{p_0(t)}.$$
Using $x_1=q_*x_0$, we obtain
$$\dot z=
\frac{x_0(1-P)}{p_0}
\left(k_1q_*-k_0z\right).$$
Setting
$$z_*=\frac{k_1}{k_0}q_*,$$
we find
\[
\frac{d}{dt}(z-z_*)
=
-k_0\frac{x_0(1-P)}{p_0}(z-z_*).
\]
The coefficient is strictly negative. Periodicity therefore implies $z(t)\equiv z_*,$ and hence $p_1(t)=z_*p_0(t).$
Writing \(x=x_0\) and \(p=p_0\), the dynamics along the periodic solution reduces to
\begin{equation}\nonumber
\begin{aligned}
\dot x
&=
x\left[Ap(1-Bx)-\varepsilon\right],\\
\dot p
&=
k_0x(1-Cp)-\varepsilon p,
\end{aligned}
\end{equation}
where $A=(1-\mu)\gamma(r_0+r_1z_*)>0,$ $B=1+q_*>0$ and $C=1+z_*>0.$ The periodic solution is contained in
$\mathcal D=\left\{(x,p):x>0,\ p>0,\ Bx<1,\ Cp<1\right\}.$
This region is simply connected. For
\[
\mathcal B(x,p)=\frac1{xp},
\]
we obtain
\[
\frac{\partial}{\partial x}\bigl(\mathcal B \dot x\bigr)
+
\frac{\partial}{\partial p}\bigl(\mathcal B \dot p\bigr)
=
-AB-\frac{k_0}{p^2}<0
\]
throughout $\mathcal D$. The Bendixson--Dulac criterion excludes
nonconstant periodic solutions in $\mathcal D$, which is a
contradiction.
It remains to consider $x_0\equiv0$. In this case,
$\dot p_0=-\varepsilon p_0,$
and periodicity gives $p_0\equiv0$. The remaining subsystem is
\[
\begin{aligned}
\dot x_1
&=
x_1\left[(1-\gamma)r_1p_1(1-x_1)-\varepsilon\right],\\
\dot p_1
&=
k_1x_1(1-p_1)-\varepsilon p_1.
\end{aligned}
\]
If $x_1\equiv0$, then periodicity implies $p_1\equiv0$, and the
solution is an equilibrium. Otherwise, $x_1>0$. Moreover, the same
boundary argument gives $0<p_1(t)<1.$
For
\[
\widetilde{\mathcal B}(x_1,p_1)=\frac1{x_1p_1},
\]
we have
\[
\frac{\partial}{\partial x_1}
\left(\widetilde{\mathcal B}\dot x_1\right)
+
\frac{\partial}{\partial p_1}
\left(\widetilde{\mathcal B}\dot p_1\right)
=
-(1-\gamma)r_1-\frac{k_1}{p_1^2}<0.
\]
The Bendixson--Dulac criterion again excludes nonconstant periodic solutions. Therefore, system~\eqref{eq:fullmodel} admits no nonconstant periodic solutions contained in $\mathcal D$.
\end{proof}

\subsection{Dynamics of the full model with time lags}\label{sec:3.2}

Before analyzing the delayed dynamics, we first observe that the introduction of the delays $\tau_0$ and $\tau_1$ does not modify the equilibrium set of system~\eqref{eq:fullmodel}. Indeed, if $E^*=(x_0^*,x_1^*,p_0^*,p_1^*)$ is an equilibrium, then the corresponding solution is constant in time. Consequently, $p_0(t-\tau_0)=p_0^*$, $p_1(t-\tau_1)=p_1^*$ and the delays disappear from the equilibrium conditions. Therefore, the equilibria are determined by the algebraic system
\begin{equation}\nonumber
\begin{cases}
0=(1-\mu)\gamma x_0^*
\bigl(r_0p_0^*+r_1p_1^*\bigr)
\bigl(1-x_0^*-x_1^*\bigr)
-\varepsilon x_0^*,\\[1mm]
0=
\bigl[\mu\gamma x_0^*+(1-\gamma)x_1^*\bigr]
\bigl(r_0p_0^*+r_1p_1^*\bigr)
\bigl(1-x_0^*-x_1^*\bigr)
-\varepsilon x_1^*,\\[1mm]
0=k_0x_0^*
\bigl(1-p_0^*-p_1^*\bigr)
-\varepsilon_p p_0^*,\\[1mm]
0=k_1x_1^*
\bigl(1-p_0^*-p_1^*\bigr)
-\varepsilon_p p_1^*.
\end{cases}
\end{equation}
This system coincides with the equilibrium equations obtained from system~\eqref{eq:fullmodel} when $\tau_0=\tau_1=0$. Hence, the existence and location of the equilibrium points are independent of the delays, and the equilibria characterized in Subsection~\ref{sec:3.1.1} remain valid for arbitrary $\tau_0,\tau_1\geq0$.

Biologically, this means that the intracellular time required for polymerase synthesis, maturation, and activation does not alter the set of potential long-term population states. Instead, these kinetic processes influence how the system approaches those states and may therefore modify the transient dynamics and the stability properties of the equilibria. Consequently, any qualitative changes induced by the delays must arise through changes in the organization of the phase space rather than through the creation or disappearance of equilibrium points.

\begin{Proposition}\label{prop:1}
Let $\tau=\max\{\tau_0,\tau_1\},$ and define 
$\mathcal A=\left\{\mathbf x\in\mathcal D:\gamma\mu x_0+\bigl[1+\gamma(\mu-2)\bigr]x_1=0\right\}.$
If $\boldsymbol\phi\in C([-\tau,0],\mathcal A),$ then the corresponding solution of system~\eqref{eq:fullmodel} satisfies
$\mathbf x(t)\in\mathcal A$ for all $t\geq0.$
Consequently, $C([-\tau,0],\mathcal A)$ is positively invariant under the semiflow generated by system~\eqref{eq:fullmodel}.
\end{Proposition}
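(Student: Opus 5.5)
The plan is to reduce the statement to a scalar linear equation for the quantity that defines $\mathcal A$, exactly as in Proposition~\ref{prop:1R}, and to check that the delays enter only through a coefficient. Write $c=1+\gamma(\mu-2)=1-2\gamma+\gamma\mu$ and $L(t)=\gamma\mu\, x_0(t)+c\, x_1(t)$. Also set
\[
R_\tau(t)=r_0p_0(t-\tau_0)+r_1p_1(t-\tau_1),\qquad X(t)=x_0(t)+x_1(t).
\]
Differentiating along a solution of system~\eqref{eq:fullmodel} gives
\[
\dot L=R_\tau(t)\bigl(1-X(t)\bigr)\Bigl[\gamma\mu(1-\mu)\gamma\, x_0+c\bigl(\mu\gamma x_0+(1-\gamma)x_1\bigr)\Bigr]-\varepsilon L .
\]
The coefficient of $x_0$ in the bracket is $\gamma\mu\bigl[(1-\mu)\gamma+c\bigr]=\gamma\mu(1-\gamma)$. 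The coefficient of $x_1$ is $c(1-\gamma)$. Hence the bracket equals $(1-\gamma)L$, and
\[
\dot L(t)=a(t)L(t),\qquad a(t)=(1-\gamma)R_\tau(t)\bigl(1-X(t)\bigr)-\varepsilon .
\]
The delayed terms appear only inside the scalar coefficient $a(t)$. They never appear as an inhomogeneous forcing.

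Next I would settle well-posedness and invariance of $\mathcal D$ for the delayed system, using the method of steps. On $[0,\tau_{\min}]$, with $\tau_{\min}=\min\{\tau_0,\tau_1\}$ if positive, the delayed quantities $p_i(t-\tau_i)$ are known continuous functions given by the history $\boldsymbol\phi$. The system is then a nonautonomous ODE with locally Lipschitz right-hand side, so a unique solution exists. If one delay vanishes, that $p_i$ is simply treated as a current state variable.

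Because $\boldsymbol\phi$ takes values in $\mathcal A\subset\mathcal D$, we have $R_\tau(t)\ge 0$ on this step. The same boundary arguments used for the non-delayed model then apply:
\begin{itemize}
\item each face $x_i=0$ and $p_i=0$ is invariant or entered inward;
\item on $X=1$ one has $\dot X=-\varepsilon X<0$;
\item on $p_0+p_1=1$ one has $\frac{d}{dt}(p_0+p_1)=-\varepsilon_p<0$.
\end{itemize}
So the solution remains in $\mathcal D$ on the first step, and iterating the argument step by step gives global existence in $\mathcal D$ for all $t\ge0$. I expect this to be the main technical point. The only genuinely delay-specific issue is ensuring $R_\tau\ge0$, and that follows inductively from $p_i\ge0$ on earlier intervals.

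Finally, $a$ is continuous on $[0,\infty)$. Integrating the linear equation gives
\[
L(t)=L(0)\exp\!\Bigl(\int_0^t a(s)\,ds\Bigr).
\]
Since $\boldsymbol\phi(0)\in\mathcal A$, we have $L(0)=0$, hence $L(t)\equiv0$. Combined with $\mathbf x(t)\in\mathcal D$, this gives $\mathbf x(t)\in\mathcal A$ for all $t\ge0$.

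Positive invariance of $C([-\tau,0],\mathcal A)$ for the semiflow follows. For $t\ge\tau$ the segment $\mathbf x_t$ lies entirely in $\mathcal A$. For $0<t<\tau$ it consists of part of the history, which lies in $\mathcal A$ by assumption, and part of the solution, which lies in $\mathcal A$ by the argument above. Only the initial value $L(0)=0$ is needed from the history: the rest of the history affects just the coefficient $a(t)$, so the conclusion holds for any history with values in $\mathcal A$.
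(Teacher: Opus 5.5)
Your proposal is correct and follows essentially the same route as the paper: both derive the scalar linear equation $\dot\Lambda=\bigl[(1-\gamma)R(t)(1-X(t))-\varepsilon\bigr]\Lambda$ for the quantity defining $\mathcal A$, integrate it using $\Lambda(0)=0$, and then check that the history segments stay in $\mathcal A$. The only difference is that you justify existence and positive invariance of $\mathcal D$ by the method of steps, propagating $R_\tau\ge 0$ inductively, whereas the paper cites the standard tangency criterion for retarded functional differential equations.
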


\begin{proof}
Let $\mathbf x(t)=\bigl(x_0(t),x_1(t),p_0(t),p_1(t)\bigr)$ be the solution corresponding to $\boldsymbol\phi\in C([-\tau,0],\mathcal A)$. Set $X(t)=x_0(t)+x_1(t),$ $P(t)=p_0(t)+p_1(t),$ and
$$R(t)=r_0p_0(t-\tau_0)+r_1p_1(t-\tau_1).$$ 
For every history segment taking values in $\mathcal D$, the delayed variables satisfy $p_0(t-\tau_0),p_1(t-\tau_1)\geq0$, and therefore $R(t)\geq0$.
Adding the first two equations of~\eqref{eq:fullmodel} gives
$$\dot X=\bigl[\gamma x_0+(1-\gamma)x_1\bigr]R(t)(1-X)-\varepsilon X,$$
whereas
$$
\dot P=\bigl[k_0x_0+k_1x_1\bigr](1-P)-\varepsilon_pP.
$$
On the boundary $x_0=0$, the first equation of system~\eqref{eq:fullmodel}
gives $\dot x_0=0$. If $x_1=0$, then
\[
\dot x_1=\mu\gamma x_0R(t)\bigl(1-X(t)\bigr)\ge0.
\]
Similarly, if $p_i=0$, $i=0,1$, the corresponding equations yield
\[
\dot p_i=k_ix_i\bigl(1-P(t)\bigr)\ge0.
\]
Finally, on the boundary $X(t)=1$, one has $\dot X(t)=-\varepsilon<0,$ whereas on the boundary $P(t)=1$,
$\dot P(t)=-\varepsilon_p<0.$
Therefore, the vector field points inward, or is tangent, on every boundary component of $\mathcal D$. By the standard tangency criterion for retarded functional differential equations, it follows that $\mathcal D$ is positively invariant.
Let $b=1+\gamma(\mu-2),$ $Q(t)=R(t)(1-X(t))$ and define
$$
\Lambda(t)=\gamma\mu x_0(t)+\bigl[1+\gamma(\mu-2)\bigr]x_1(t).
$$
Differentiating along solutions yields
$$
\dot \Lambda
=
\gamma\mu\left[\gamma(1-\mu)x_0Q-\varepsilon x_0\right]
+b\left[\bigl(\gamma\mu x_0+(1-\gamma)x_1\bigr)Q-\varepsilon x_1\right].
$$
Since $\gamma(1-\mu)+b=1-\gamma,$ we obtain
$
\dot \Lambda=\left[(1-\gamma)R(t)(1-X(t))-\varepsilon\right]\Lambda.
$
Therefore,
$$
\Lambda(t)=\Lambda(0)\exp\left(\int_0^t\left[(1-\gamma)R(s)(1-X(s))-\varepsilon\right]ds\right).
$$
Because $\boldsymbol\phi(0)\in\mathcal A$, we have $\Lambda(0)=0$. Hence,
$\Lambda(t)=0$ for all $t\geq0.$
Together with the positive invariance of $\mathcal D$, this implies
$\mathbf x(t)\in\mathcal A$ for all $t\geq0.$

Finally, let $t\geq0$ and $\theta\in[-\tau,0]$. If $t+\theta\geq0$, then the preceding argument gives $\mathbf x(t+\theta)\in\mathcal A$. If $t+\theta<0$, then $\mathbf x(t+\theta)=\boldsymbol\phi(t+\theta)\in\mathcal A$, since the initial history takes values in $\mathcal A$. Therefore, $\mathbf x_t(\theta)=\mathbf x(t+\theta)\in\mathcal A$ for every $\theta\in[-\tau,0]$, and hence $\mathbf x_t\in C([-\tau,0],\mathcal A)$ for every $t\geq0$. Thus, $C([-\tau,0],\mathcal A)$ is positively invariant.
\end{proof}

\begin{Remark}
When $\tau_0=\tau_1=0$, system~\eqref{eq:fullmodel} reduces to an ordinary differential equation, and the phase space $C([-\tau,0],\mathcal D)$ is naturally identified with $\mathcal D$. In this case, uniqueness allows solutions to be continued backward on their maximal intervals of existence. Since the identity defining $\mathcal A$ is preserved in both time directions wherever the solution exists, $\mathcal A$ is invariant under the flow of the non-delayed system. Thus, Proposition~\ref{prop:1} recovers Proposition~\ref{prop:1R} when $\tau_0=\tau_1=0$.
\end{Remark}

The previous results show that the introduction of delays does not modify the equilibrium equations and preserves the invariant set $\mathcal A$. Their effect is therefore dynamical rather than equilibrium-algebraic: although the equilibrium coordinates remain unchanged, the delays may alter transient behavior, local stability, oscillatory responses, and the global organization of the basins of attraction. We now illustrate these delay-induced effects.

Figure~\ref{fig:2basins}, computed numerically, reveals that the delay parameter $\tau$ does not merely slow down convergence, but rather changes the global organization of trajectories in phase space. As $\tau$ increases, the basin of attraction of the coexistence equilibrium shrinks, whereas the basin leading to extinction expands. Therefore, for the same values of $\mu$, $\gamma$, and the remaining biological parameters, initial conditions that converge to persistence when $\tau=0$ may instead evolve toward extinction for sufficiently large delays. This shows that replication delay constitutes an alternative extinction mechanism, not through changes in mutation pressure or replicative fitness, but through a delay-induced reorganization of the basins of attraction.
\begin{figure}
\captionsetup{width=\linewidth}
\begin{tikzpicture}[scale=0.8]

% Imagen 1
\node[inner sep=0pt] (caso1) at (0,0)
{\includegraphics[scale=0.35]{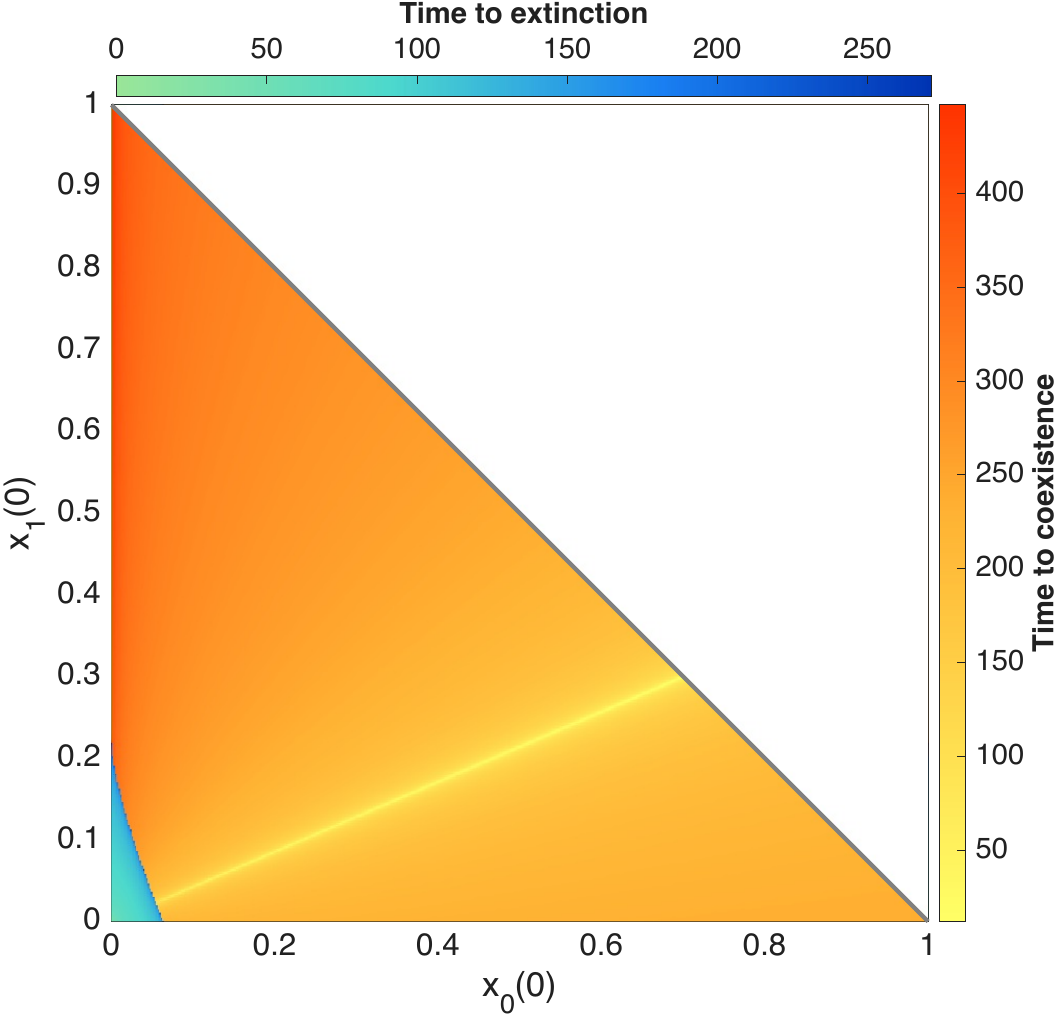}};

% Imagen 2
\node[inner sep=0pt] (caso2) at (9.5,0)
{\includegraphics[scale=0.35]{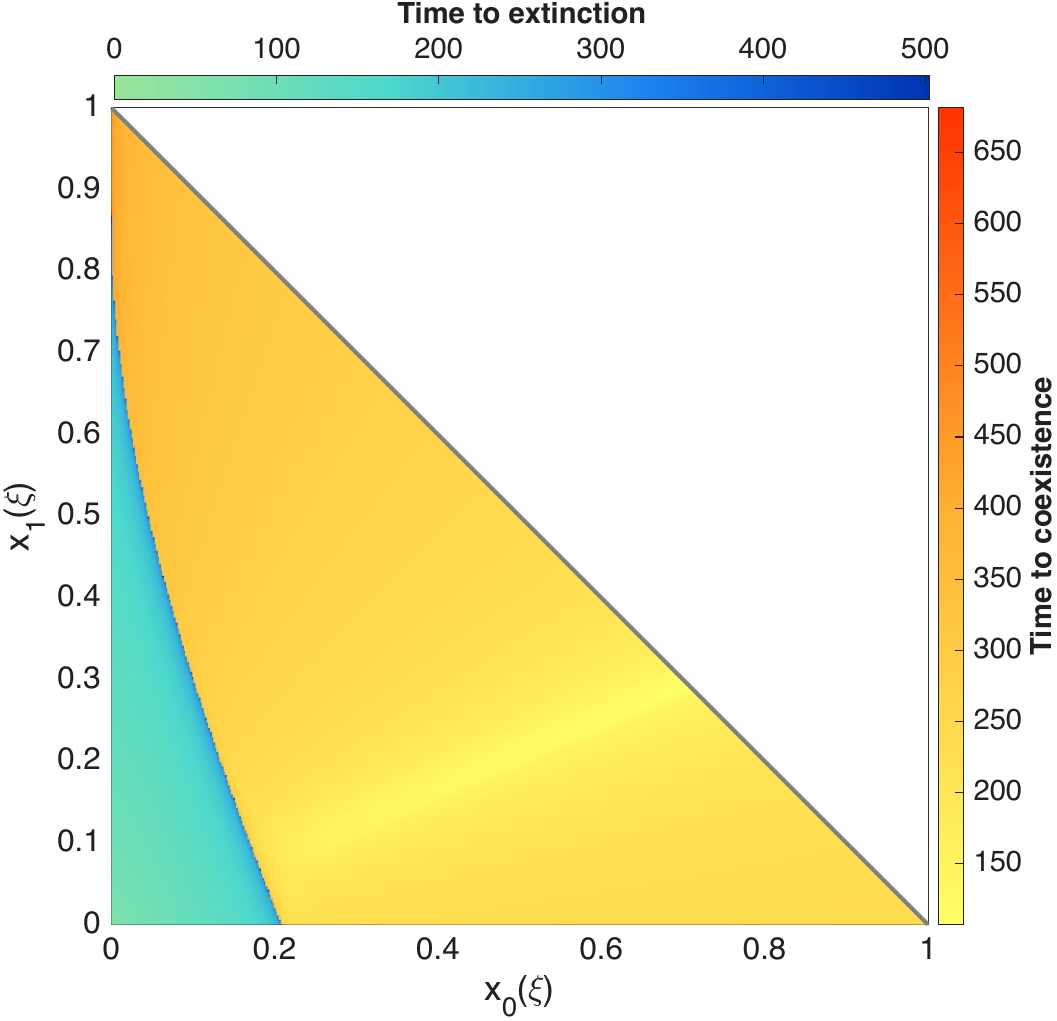}};

% Imagen 3
\node[inner sep=0pt] (caso3) at (0,-8)
{\includegraphics[scale=0.35]{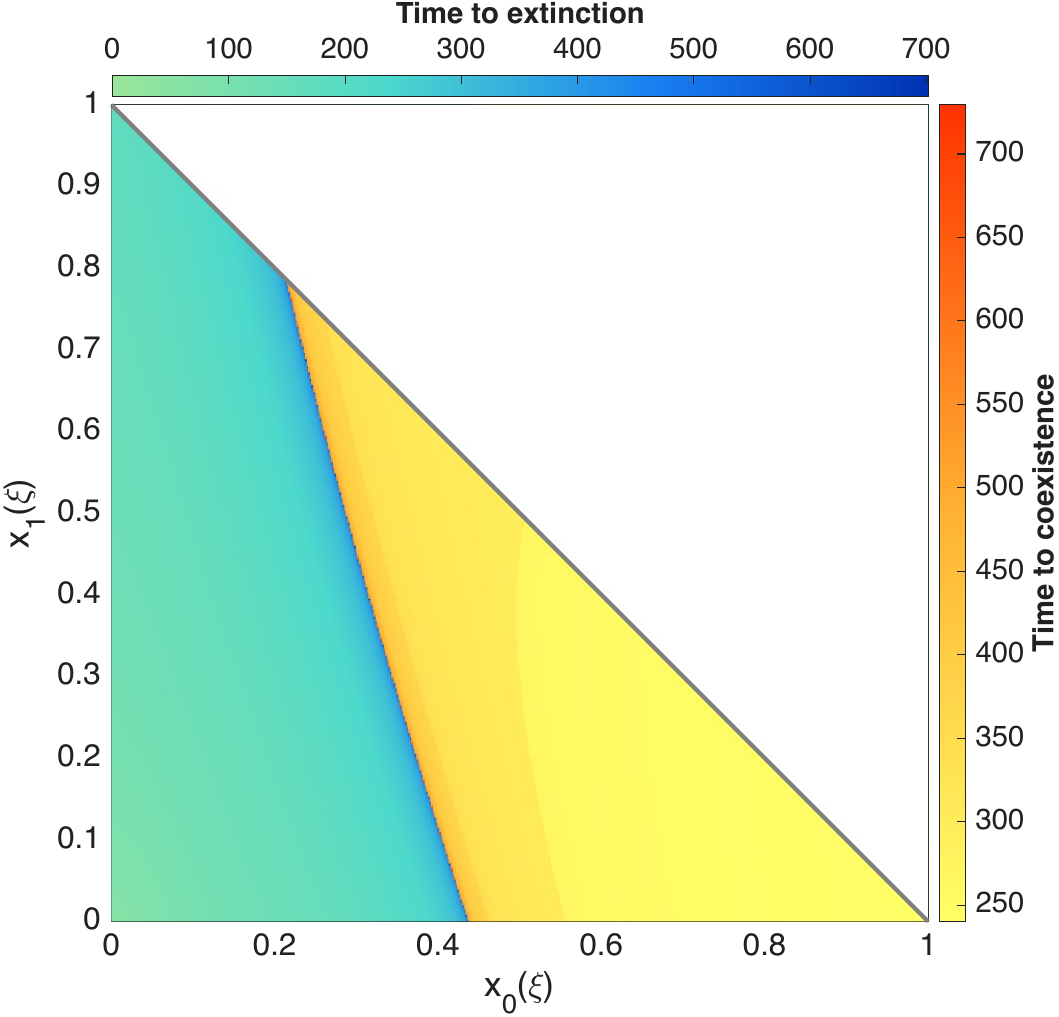}};

% Imagen 4
\node[inner sep=0pt] (caso3) at (9.8,-5.35)
{\includegraphics[scale=0.35]{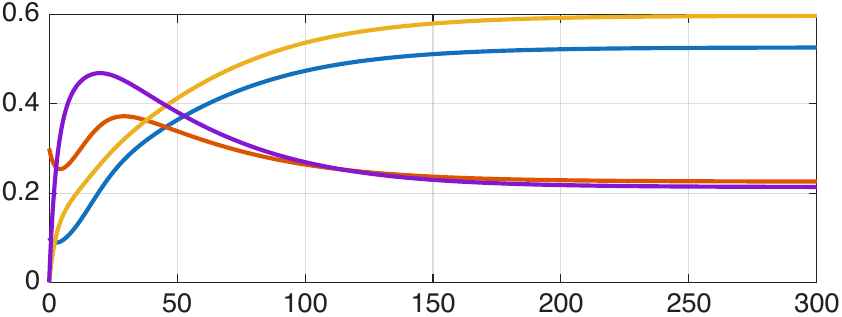}};
\node[inner sep=0pt] (caso3) at (9.8,-7.5)
{\includegraphics[scale=0.35]{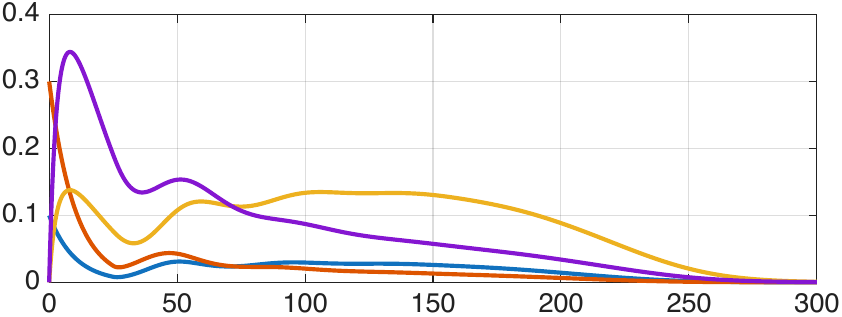}};
\node[inner sep=0pt] (caso3) at (9.8,-9.75)
{\includegraphics[scale=0.35]{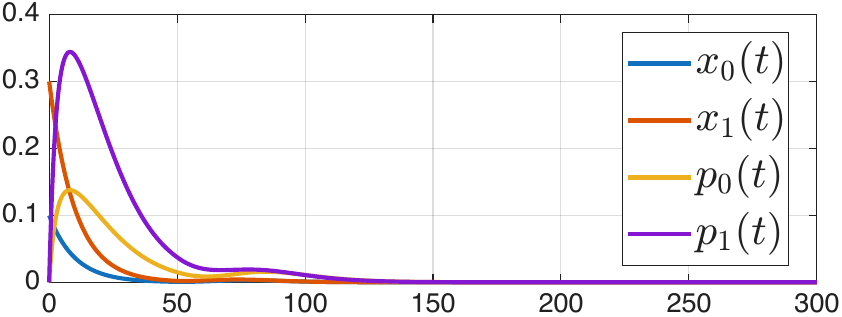}};

% Etiquetas (primer panel)
\coordinate [label=below:{{\textcolor{black}{{\bf a.}}}}] (E2) at (-4,3.9);
\coordinate [label=below:{{\textcolor{black}{\footnotesize $\tau=0$}}}] (E2) at (0,2.8);
\node at (-2.6,-1.3) {$\ast$};

% Etiquetas (segundo panel)
\coordinate [label=below:{{\textcolor{black}{{\bf b.}}}}] (E2) at (5,3.9);
\coordinate [label=below:{{\textcolor{black}{\small $\tau=25$}}}] (E2) at (9.5,2.8);
\node at (6.9,-1.3) {$\ast$};

% Etiquetas (tercer panel)
\coordinate [label=below:{{\textcolor{black}{{\bf c.}}}}] (E2) at (-4,-4.1);
\coordinate [label=below:{{\textcolor{black}{\footnotesize $\tau=50$}}}] (E2) at (0,-5.3);
\node at (-2.6,-9.25) {$\ast$};

% Etiquetas (cuarto panel)
\coordinate [label=below:{{\textcolor{black}{{\bf d.}}}}] (E2) at (5,-4.1);
\coordinate [label=below:{{\textcolor{black}{\scriptsize$\tau=0$}}}] (E2) at (9.8,-4.8);
\coordinate [label=below:{{\textcolor{black}{\scriptsize $\tau=25$}}}] (E2) at (9.8,-6.9);
\coordinate [label=below:{{\textcolor{black}{\scriptsize$\tau=50$}}}] (E2) at (9.8,-9.1);
\coordinate [label=below:{\rotatebox{90}{\textcolor{black}{\footnotesize state of variables}}}] (E2) at (6.3,-6.0);
\coordinate [label=below:{{\textcolor{black}{\footnotesize time}}}] (E2) at (9.8,-10.9);

\end{tikzpicture}
\caption{(a--c) Basins of attraction of system~\eqref{eq:fullmodel} for $p_0(\xi)=p_1(\xi)=0$ for every $\xi\in[-\tau,0]$ for $\tau=0$, $25$, and $50$, respectively. Blue indicates convergence to extinction and orange indicates convergence to coexistence. Increasing $\tau$ enlarges the extinction basin and reduces the basin of coexistence, showing that delay may act as an alternative route to extinction without changing mutation or fitness parameters. (d) Time series associated with the asterisked initial conditions. The parameter values are set as in Fig.~\ref{fig:Gradiente4D}.}
\label{fig:2basins}
\end{figure}

This phenomenon was previously identified in the three-variable delayed quasispecies model with functional complementation studied by Turner et al.~\cite{TurnerPRE}, which we extend and analyze in the next section. In that model, increasing the delay redirects trajectories from viral persistence toward extinction through a reorganization of the basins of attraction, a mechanism termed $\tau$-tipping. Remarkably, the same qualitative phenomenon is recovered here in the full four-variable model, despite its different equilibrium structure and the ability of both genome classes to produce functional RdRps. This result suggests that $\tau$-tipping is a structurally robust mechanism in polymerase-mediated quasispecies dynamics, with $\tau$ acting as an independent dynamical control parameter capable of triggering extinction through basin reorganization and attractor selection.

From a virological perspective, $\tau$-tipping describes failure to establish a self-sustaining intracellular replication system. During the interval before polymerase contributes effectively to RNA amplification, genomes continue to be lost through degradation. If the population crosses the basin boundary before sufficient replication capacity accumulates, subsequent polymerase production cannot rescue it. The effect is therefore not equivalent to slower exponential growth: it is a history-dependent transition between alternative long-term outcomes.

\begin{figure}[ht]
\centering
\captionsetup{width=\linewidth}
\begin{minipage}{0.38\textwidth}
\centering
\includegraphics[width=\textwidth]{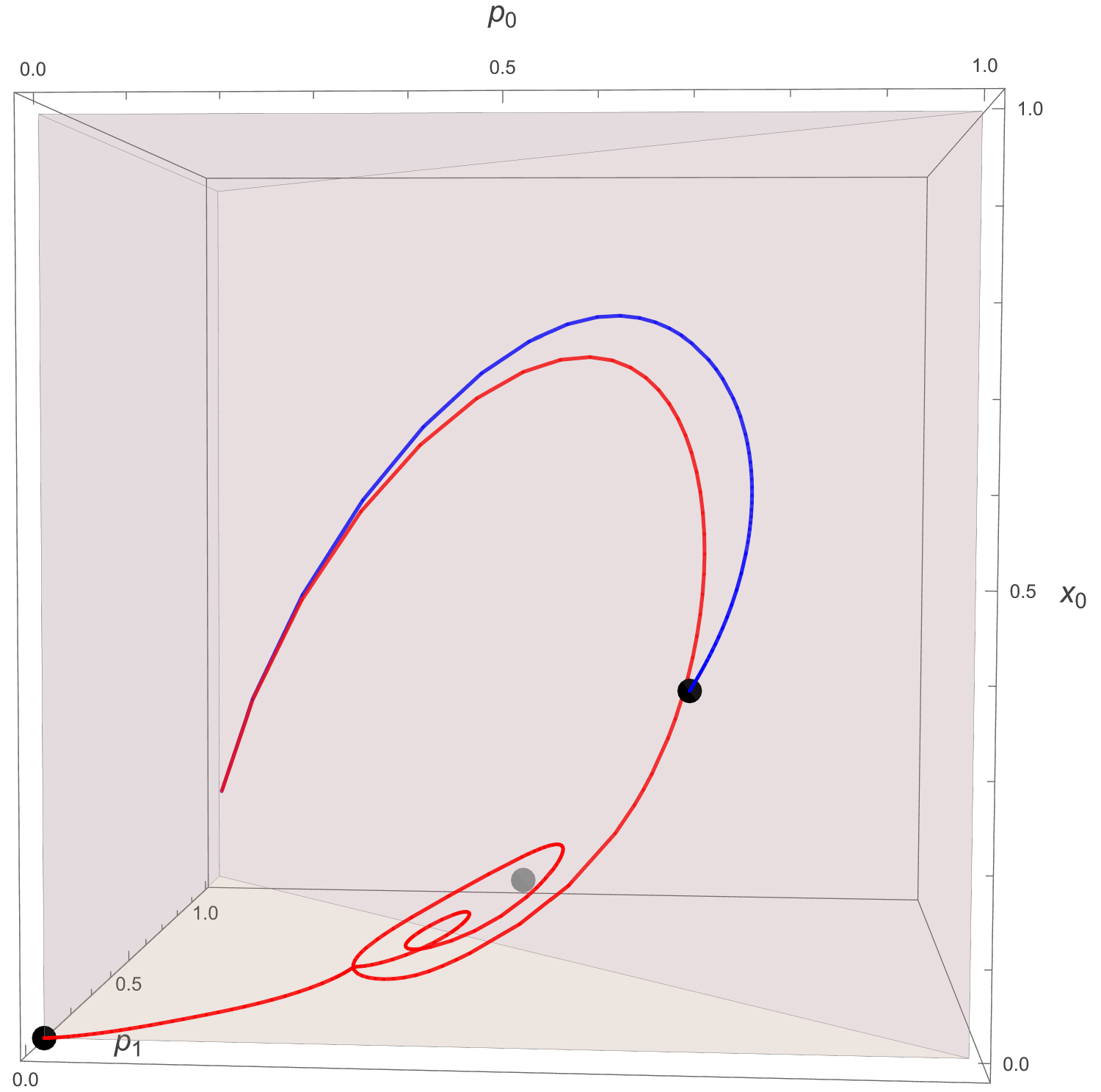}

\end{minipage}
\qquad\qquad
\begin{minipage}{0.38\textwidth}
\centering
\includegraphics[width=\textwidth]{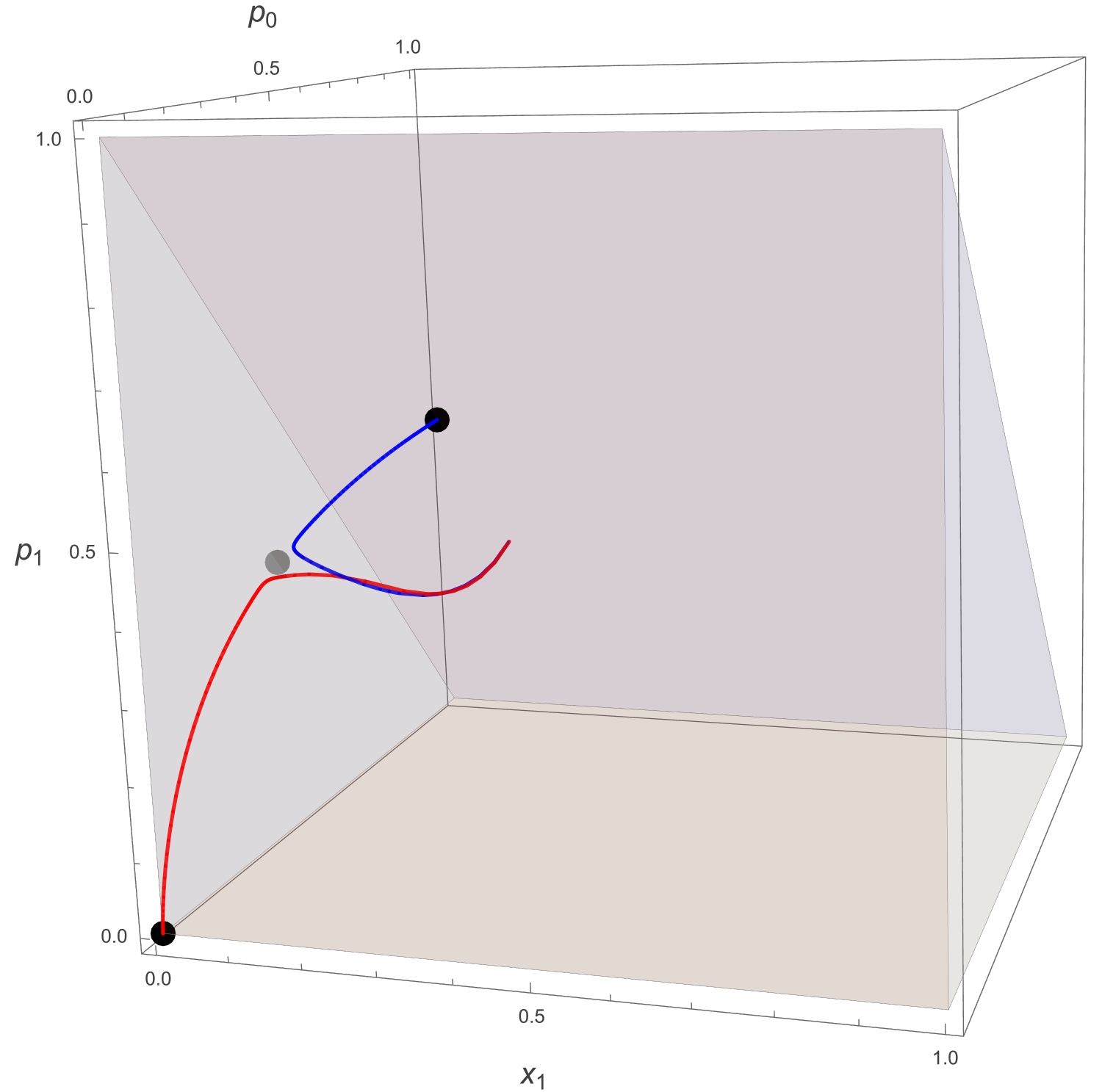}
\end{minipage}

\caption{Comparison of the delayed dynamics on two coordinate projections. The transparent polyhedron represents the biologically feasible region, while the black and gray points denote the equilibrium states of the reduced systems. (a) Projection of the trajectories onto the $(x_0,p_0,p_1)$ coordinates showing trajectories for $\tau=0$ (blue) and $\tau=50$ (red). The simulations were performed using $\mu=0.3$, $\gamma=0.334$, $r_0=1$, $r_1=0.7$, $\varepsilon=0.1$, $k_0=0.6$, and constant initial histories $x_0(\xi)=0.2$, $p_0(\xi)=0.1$, and $p_1(\xi)=0.5$ for $\xi\in[-\tau,0]$. (b) Projections onto the $(x_1,p_0,p_1)$ coordinates showing trajectories for $\tau=0$ (blue) and $\tau=10$ (red). The simulations were performed using $\mu=0.3$, $\gamma=0.964$, $r_0=1$, $r_1=7$, $\varepsilon=0.1$, $k_0=0.6$, $k_1=0.5$, and constant initial histories $x_1(\xi)=0.3$, $p_0(\xi)=0.5$, and $p_1(\xi)=0.4$ for $t\leq0$. In both projections, the inclusion of a sufficiently large delay induces a qualitative change in the long-term dynamics, shifting the system from a stable coexistence equilibrium to an extinction equilibrium.}
\label{fig:comparison_delays}
\end{figure}

%%%%%%%%%% aqui terminana FULL MODEL

\subsection{The reduced complementation model: equilibrium and bifurcation structure}
\label{sec:complementation}
Building upon the theoretical framework established in the previous sections, we now revisit the reduced complementation model introduced in Ref.~\cite{TurnerPRE} in order to obtain a systematic characterization of its equilibrium and bifurcation structure and to compare it directly with the full model analyzed above. In contrast with system (1), where both genome classes encode functional polymerases, the reduced system describes a defective mutant class that cannot produce its own RdRp and therefore depends on polymerase supplied in trans by the master genome. This formulation captures a biologically relevant complementation mechanism in which defective viral genomes lacking the RdRp genes depend on the replicative machinery supplied by viable master genomes.

As in the full model, we incorporate a time lag to represent the intracellular time required for RdRp synthesis, maturation, and activation. Thus, the model keeps the essential delayed polymerase-mediated replication mechanism, but removes the mutant-encoded polymerase variable. In this sense, the complementation model may be regarded as a reduction of the complete system~\eqref{eq:fullmodel}, obtained by focusing on the variables and interactions directly involved in the dynamics of defective genomes. Whereas Ref.~\cite{TurnerPRE} used this model to identify the \(\tau\)-tipping mechanism, the analysis below provides its complete equilibrium classification, analytical bifurcation conditions, and comparison with the autonomous-mutant model.

Let us consider system~\eqref{eq:DIviruses} and $p:= p_0.$ The biologically relevant domain is the submanifold of $\mathcal D$ given by
\begin{equation}\nonumber
\tilde{\mathcal D}
=
\left\lbrace
(x_0,x_1,p)\in\mathbb R^3:
x_0\geq 0,\ x_1\geq 0,\ 0\leq p\leq 1,\ x_0+x_1\leq 1
\right\rbrace .
\end{equation}
Since $x_0=0$ is invariant and on each boundary of $\tilde{\mathcal D}$ the vector field of \eqref{eq:DIviruses} is negative, it follows that $\tilde{\mathcal D}$ is invariant under the flow, see Fig.~\ref{fig:1}. This system has three equilibrium points given by $\mathbf x_1^*=(0,0,0)$ (full extinction),and the pair  $\mathbf x^*_{2}=(x_{02},x_{12},p_{2})$ and $\mathbf  x^*_{3}=(x_{03},x_{13},p_{3})$ involved in the coexistence between the master genome and its mutants, with
\begin{equation}
    \begin{split}\label{eq:equilCom}
        x_{0(2,3)}&= \frac{\mp2 \varepsilon ^2 (\gamma  (\mu -2)+1)}{\mathfrak{c}\pm k_0 (\gamma  (\mu -2)+1) (\gamma  (\mu -1) r_0+\varepsilon )}, \\
        x_{1(2,3)}&= \frac{\pm2 \gamma  \mu  \varepsilon ^2}{\mathfrak{c}\pm k_0 (\gamma  (\mu -2)+1) (\gamma  (\mu -1) r_0+\varepsilon )},\\
        p_{2,3}&= \frac{\mp2 k_0 \varepsilon  (\gamma  (\mu -2)+1)}{ \mathfrak{c}\pm k_0 (\gamma  (\mu -2)+1) (\gamma  (\mu -1) r_0-\varepsilon )},
    \end{split}
\end{equation}
with $\mathfrak{c}=\sqrt{ (k_0 (\gamma  (\mu -2)+1) (\gamma  (\mu -1) r_0+\varepsilon ))^2+4 \gamma  r_0 \varepsilon ^2 ( 2\gamma-1)(1-\mu) k_0 (\gamma  (\mu -2)+1)}.$ Note that the upper sign in \eqref{eq:equilCom} corresponds to the equilibrium $\mathbf x_2^*$ while the lower
sign corresponds to $\mathbf x_3^*.$
\begin{comment}
\begin{table}[H]
\centering
\label{tab:equilibria_summary}
\begin{tabular}{cccc}
\toprule
\textbf{Equilibria} & \textbf{State Variables at Equilibrium} & \textbf{Interpretation}\\
\midrule
$\mathbf{x^*_1}$ & $x_0 = 0,\ x_1 = 0,\ p = 0$ & Full extinction\\
$\mathbf{x^*_2}$ & $x_0 > 0,\ x_1 > 0,\ p > 0$ & Coexistence\\
$\mathbf{x^*_3}$ & $x_0 > 0,\ x_1 > 0,\ p > 0$ & Coexistence\\
\bottomrule
\end{tabular}
\captionsetup{width=\linewidth}
\caption{Summary of equilibrium points and their biological interpretation in the system with complementation.}
\end{table}
\end{comment}
\begin{figure}
\centerline{\includegraphics[width=\textwidth]{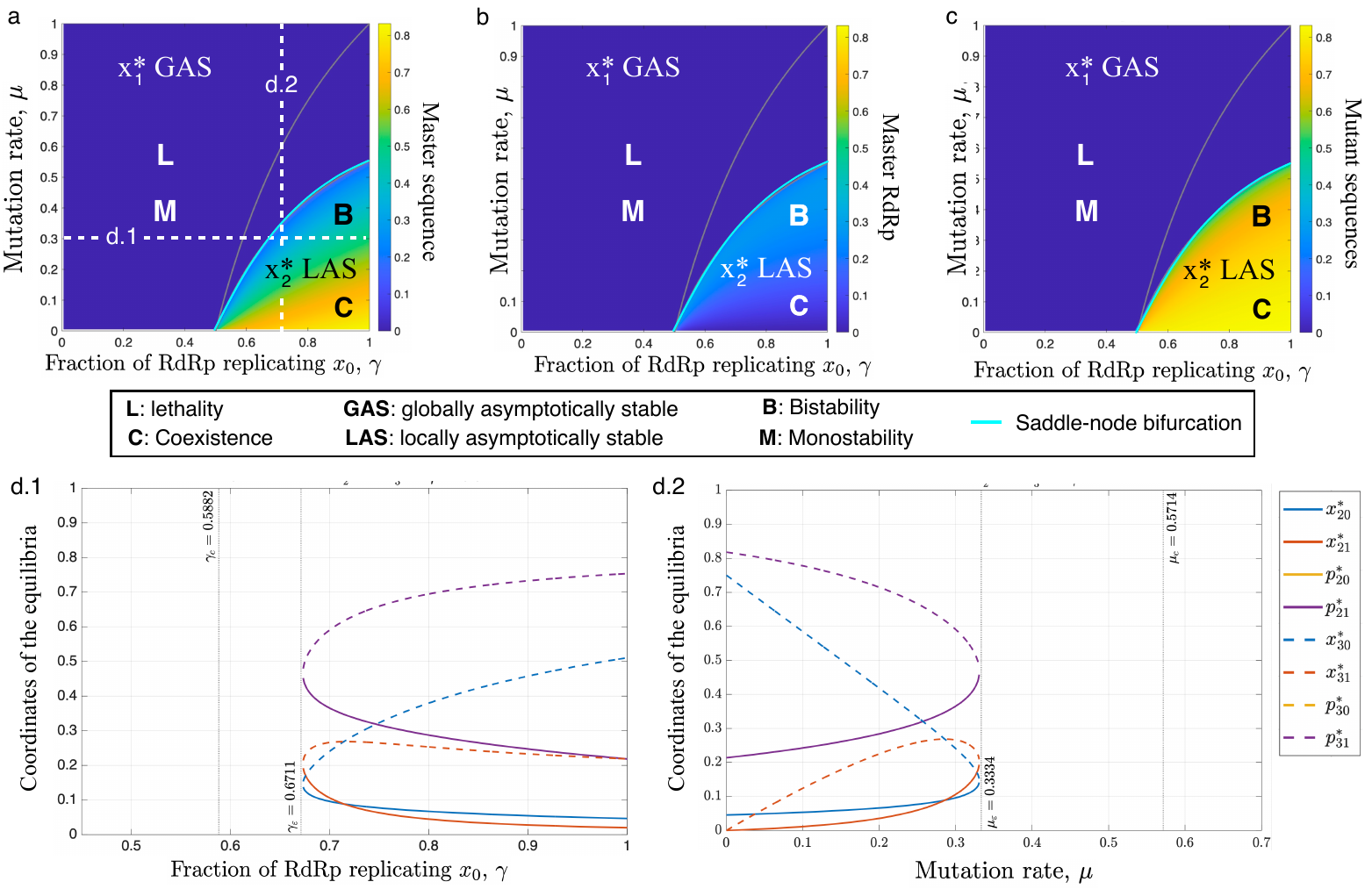}}
  \captionsetup{width=\linewidth}
\caption{Phase diagram of the master genome equilibrium of eqs.~\eqref{eq:DIviruses} for $\tau=0$, showing persistence at low $\mu$ and large $\gamma$. The blue area corresponds to the full collapse of the system. Parameters are fixed at $r_0=0.7,$ $k = 0.6$, and $\varepsilon=\varepsilon_p = 0.1$. The gray curves delimit the parameter region for which the persistence equilibria $\mathbf x_2^*$ and $\mathbf x_3^*$ are real and lie inside the biologically relevant set $\tilde{\mathcal D}$.}
\label{fig:3D_bif}		
\end{figure}
\vspace{-0.45cm}
\begin{figure}[h!]
\captionsetup{width=\linewidth}
\begin{center}
		\subfigure[$\gamma\leq\frac{1}{2-\mu}$]{\includegraphics[scale=0.23]{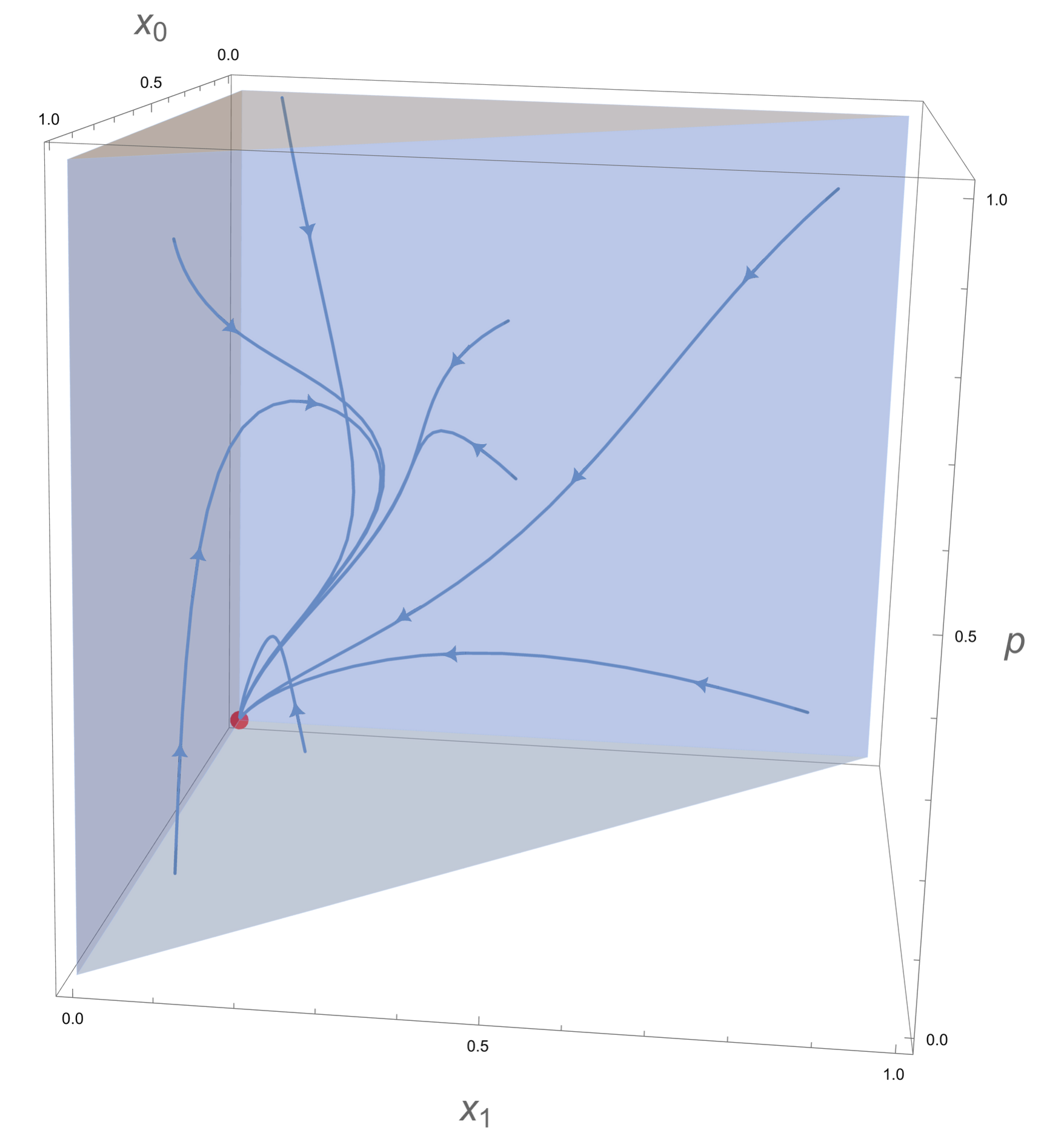}}\qquad
		\subfigure[$\gamma>\frac{1}{2-\mu}$]{\includegraphics[scale=0.3]{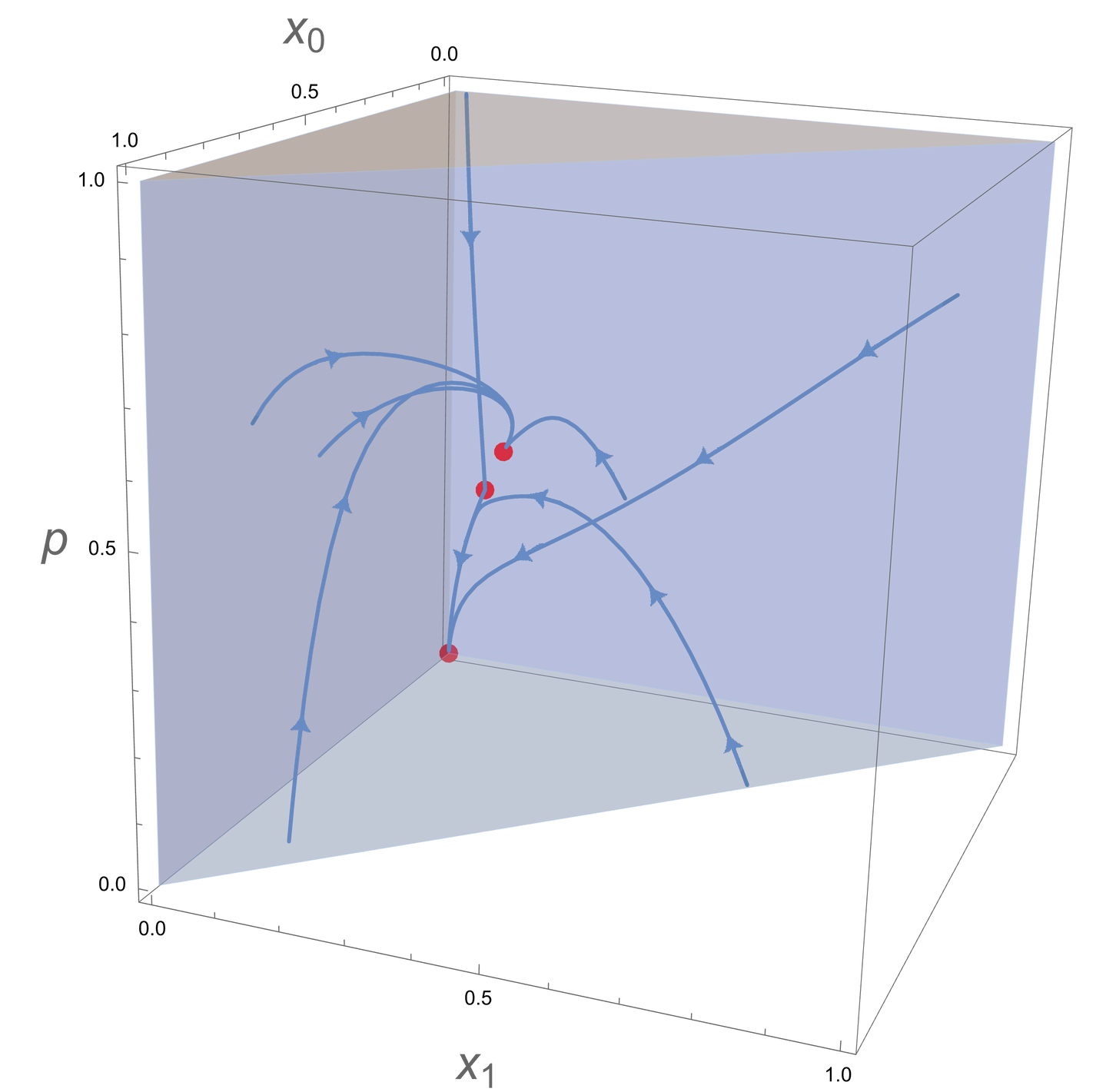}}
\end{center}
\vspace{-0.25cm}
\caption{Dynamics of the system \eqref{eq:DIviruses} in the biologically relevant region $\tilde{\mathcal D}$. (a) Phase portrait for $\gamma \leq \frac{1}{2-\mu}$ where the origin is the only asymtptotic state with $\mu=0.4$, $\gamma=0.6$, $k=0.5$, $\varepsilon=0.14$ and $\tau=0$. (b) Dynamics for $\gamma >\frac{1}{2-\mu}$, where the system has three equilibria and both extinction and persistence of the genomes is possible. The parameters are $\mu=0.4$, $\gamma=0.9$, $k=0.5$, and $\varepsilon=0.14$.}
\label{fig:1}		
\end{figure}
\begin{Proposition}\label{prop:x1yx22}
Let $\mathbf{x}_2^*$ and $\mathbf{x}_3^*$ be equilibria of the system. 
If the conditions $\mu < 2 - \frac{1}{\gamma}$ and
$$\varepsilon \leq \frac{-\gamma(1-\mu)k_0(1 - 2\gamma + \gamma\mu)r_0}{- k_0(1 - 2\gamma + \gamma\mu) + 2\sqrt{(1 - 2\gamma)\gamma r_0(1 - \mu)\, k_0(1 - 2\gamma + \gamma\mu)}}$$
are satisfied, then the equilibria $\mathbf{x}_2^*$ and $\mathbf{x}_3^*$ lie in the bounded region $\tilde{\mathcal D}$.
\end{Proposition}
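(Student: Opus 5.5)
The plan is to set $\tau=0$, which is harmless because, as observed for the full model, the equilibrium equations of \eqref{eq:DIviruses} do not depend on the delay. I will then reduce the coexistence equilibria to a single scalar quadratic in $x_0$ and read off the conditions of the proposition as sign and discriminant conditions on that quadratic. Throughout I write $b=1+\gamma(\mu-2)$ and $A=(1-\mu)\gamma r_0>0$, and use $\varepsilon_p=\varepsilon$.

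\emph{Step 1: the mutant-to-master ratio.} At an equilibrium with $x_0\neq0$, the first equation gives $A\,p\,(1-x_0-x_1)=\varepsilon$. Substituting $r_0p(1-X)=\varepsilon/((1-\mu)\gamma)$ into the second equation leaves the linear relation $\gamma\mu x_0+b\,x_1=0$. This is the same invariant relation that defines $\mathcal A$ in Proposition~\ref{prop:1R}. Hence $x_1=q\,x_0$ with $q=-\gamma\mu/b$. A positive $x_1$ therefore requires $b<0$, which is exactly the hypothesis $\mu<2-1/\gamma$; it also forces $\gamma>1/2$. Moreover $B:=1+q=(1-2\gamma)/b>0$.

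\emph{Step 2: the polymerase and the scalar quadratic.} The third equation gives $p=k_0x_0/(k_0x_0+\varepsilon)$. Inserting this into $A\,p\,(1-Bx_0)=\varepsilon$ yields
\[
Ak_0B\,x_0^2-k_0(A-\varepsilon)\,x_0+\varepsilon^2=0 .
\]
Its two roots should coincide with the $\mp$ branches in \eqref{eq:equilCom}, possibly after rationalizing.

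\emph{Step 3: membership in $\tilde{\mathcal D}$.} The product of the roots is $\varepsilon^2/(Ak_0B)>0$ and their sum is $(A-\varepsilon)/(AB)$. So once the roots are real they are both positive, provided $A>\varepsilon$. Feasibility then follows automatically from the equilibrium equations:
\begin{itemize}
\item $p\in(0,1)$ holds by its explicit formula;
\item $x_1=qx_0>0$ holds by Step 1;
\item $x_0+x_1=Bx_0<1$ holds because $1-Bx_0=\varepsilon(k_0x_0+\varepsilon)/(Ak_0x_0)>0$.
\end{itemize}
Thus the only genuine condition is that the discriminant be nonnegative, namely $k_0(A-\varepsilon)^2\ge 4AB\varepsilon^2$. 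With $A>\varepsilon$ this is equivalent to
\[
\varepsilon\le\frac{A\sqrt{k_0}}{\sqrt{k_0}+2\sqrt{AB}} .
\]
This bound also implies $A>\varepsilon$.

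\emph{Step 4: matching the stated bound.} I multiply the numerator and denominator by $\sqrt{k_0}\,(-b)$ and use $AB\,b^2=A(1-2\gamma)b$. This should give exactly the displayed expression with numerator $-\gamma(1-\mu)k_0 b r_0$, where $b<0$ and $1-2\gamma<0$ make the radicand positive.

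I expect the main obstacle to be bookkeeping rather than ideas: Step 4, and checking that the closed forms \eqref{eq:equilCom} with their sign conventions are the two roots of the quadratic. If the direct identification is messy, I will argue directly from the quadratic and note that \eqref{eq:equilCom} is just its solution written in rationalized form.
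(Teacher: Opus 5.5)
Your proof is correct, and it takes a genuinely different route from the paper's.

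\textbf{The paper's route.} The paper works directly with the closed forms \eqref{eq:equilCom}. It writes the common radicand as a quadratic polynomial in $\varepsilon$ with roots $\varepsilon_\pm$. A three-case sign analysis, according to the sign of $-\alpha-2\sqrt{\alpha\Delta r_0(1-2\gamma)}$ with $\alpha=k_0b$ and $\Delta r_0=A$, shows that this radicand is nonnegative. It then bounds each of the six coordinates separately, by rationalizing denominators and using $\varepsilon\le\Delta r_0$, $\alpha<0$ and $\gamma>1/2$.

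\textbf{Your route.} You use the structure of the equilibrium equations: the fixed ratio $x_1=qx_0$ from the $\mathcal A$-relation, $p$ as an explicit function of $x_0$, and one scalar quadratic. As a result:
\begin{itemize}
\item positivity follows from Vieta's formulas;
\item the constraints $p<1$ and $x_0+x_1<1$ come out as identities rather than estimates;
\item the discriminant condition reduces to a single square-root comparison.
\end{itemize}
Your Step~4 algebra reproduces the stated bound exactly.

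\textbf{The deferred identification.} This step is straightforward. One has $\mathfrak c=|b|\sqrt D$ with $D=k_0^2(A-\varepsilon)^2-4Ak_0B\varepsilon^2$. Hence the upper and lower signs in \eqref{eq:equilCom} give $x_0=2\varepsilon^2/\bigl(k_0(A-\varepsilon)\pm\sqrt D\bigr)$, the smaller and larger root respectively. Your fallback is rigorous in any case: every equilibrium with $x_0\neq0$ is a root of your quadratic, and the only equilibrium with $x_0=0$ is the origin.

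\textbf{What each route buys.} Your route is shorter and less prone to algebraic slips, and it yields more. For $b<0$, the $\varepsilon$-bound is also necessary for real, feasible coexistence equilibria. Equality in the bound is exactly $D=0$, the saddle-node point where $\mathbf x_2^*=\mathbf x_3^*$. Moreover, $b<0$ is itself necessary for $x_1>0$. So your argument characterizes exactly when feasible coexistence equilibria exist. The paper's route, in exchange, works with the formulas exactly as displayed and extracts explicit coordinate bounds, such as $x_0,x_1<1/2$ at $\mathbf x_2^*$.
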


Biologically, Proposition~\ref{prop:x1yx22} shows that complementation can maintain defective genomes only as long as the master genome remains sufficiently viable. Once the mutation-affinity balance crosses the critical threshold, the source of functional RdRp collapses, and defective genomes cannot persist autonomously. Therefore, the threshold does not merely mark the loss of the master sequence; in the DVG model it marks the onset of total population extinction. In classical quasispecies models, disappearance of the master genome can leave a self-replicating mutant cloud. In the present DVG limit, by contrast, replication competence is a collective property of the complete-genome/DVG system. Because only one class supplies polymerase, the mutant population cannot outlive the loss of its helper. Thus, functional dependence converts a compositional transition into an ecological collapse of the intracellular replicating community.

Proposition~\ref{prop:x1yx22} also shows that the persistence of the master genome under complementation is constrained by a threshold condition involving the mutation and affinity parameters. When genome degradation is incorporated through the parameter $\varepsilon$, this admissibility criterion becomes more restrictive: only a smaller set of parameter values leads to biologically meaningful persistence equilibria. The corresponding feasible domain is illustrated in Fig.~\ref{fig:3D_bif}.

%Figure~\ref{fig:3D_bif} shows that, in the complementation model, increasing $\mu$ or decreasing $\gamma$ can drive the system from viral persistence to complete lethality. Importantly, this transition does not correspond to the classical error threshold: because mutant genomes cannot produce functional RdRp, loss of the master genome ultimately implies extinction of the entire population. Accordingly, the catastrophe to lethality is mediated by a saddle-node bifurcation.

%Figures~\ref{fig:1} and~\ref{fig:5} then explore the dynamical consequences of introducing time lags into the system. These simulations show that, although the algebraic location of the equilibria is not changed by the delay, the transient approach to the final state can be strongly modified by $\tau$. 

Figure~\ref{fig:3D_bif} shows that, in the complementation model, increasing $\mu$ or decreasing $\gamma$ can drive the system from viral persistence to complete lethality. Importantly, this transition does not correspond to the classical error threshold: because mutant genomes cannot produce functional RdRp, loss of the master genome ultimately implies extinction of the entire population. Accordingly, the catastrophe to lethality is mediated by a saddle-node bifurcation. The corresponding phase portraits further illustrate the geometry of this transition (Figs.~\ref{fig:1}-\ref{fig:6}). For $\gamma\leq(2-\mu)^{-1}$, the extinction equilibrium is the global attractor and all trajectories eventually converge to the origin (Figs.~\ref{fig:1}a), whereas for $\gamma>(2-\mu)^{-1}$ additional equilibria emerge through the saddle-node bifurcation, allowing for viral coexistence depending on the initial condition (Fig.~\ref{fig:1}b). Figure~7 shows that, within the extinction regime, trajectories converge to the origin for different initial conditions and values of $\tau$, although the delay modifies their transient dynamics. In contrast, close to the persistence threshold, Fig.~\ref{fig:6} reveals a strong dependence on both the initial condition and the delay: increasing $\tau$ can redirect trajectories from the coexistence attractor toward complete extinction. Thus, as in the full model, delayed RdRp availability can induce $\tau$-tipping by reorganizing attractor selection without modifying the equilibrium structure.

\begin{figure}[h!]
\captionsetup{width=\linewidth}
\begin{center}
		\subfigure[$x_0(\xi)=0.2,$ $x_1(\xi)=0.8$ and \phantom{(aa)} $p(\xi)=0.9,$ with $\xi\in\interval{-\tau}{0}.$]{\includegraphics[scale=0.4]{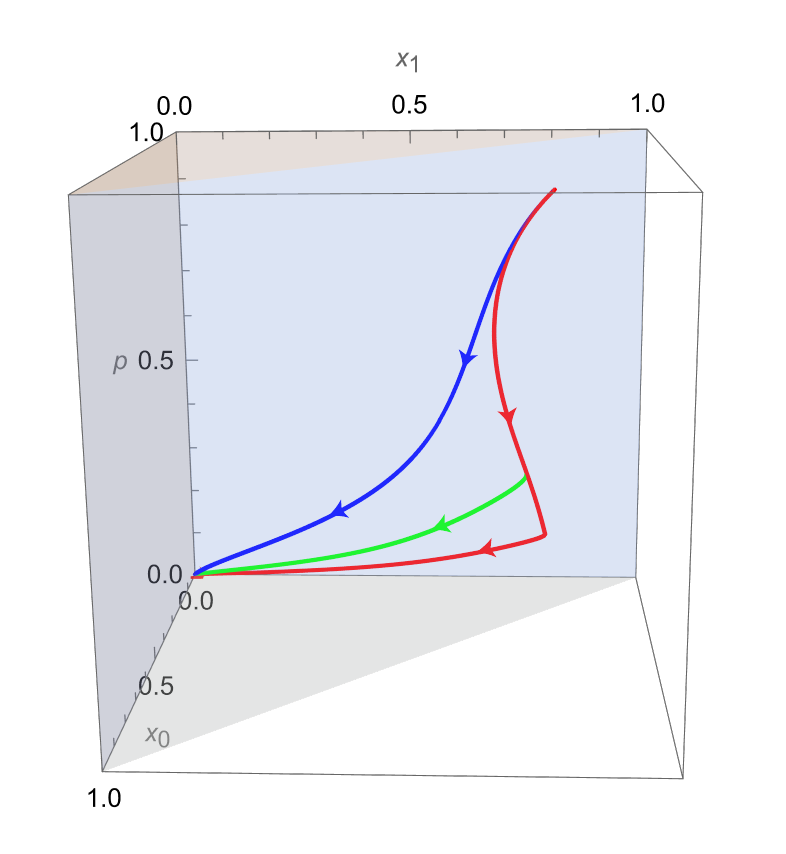}}
		\subfigure[$x_0(\xi)=1,$ $x_1(\xi)=0$ and \phantom{(ba)} $p(\xi)=0.5,$ with $\xi\in\interval{-\tau}{0}.$]{\includegraphics[scale=0.4]{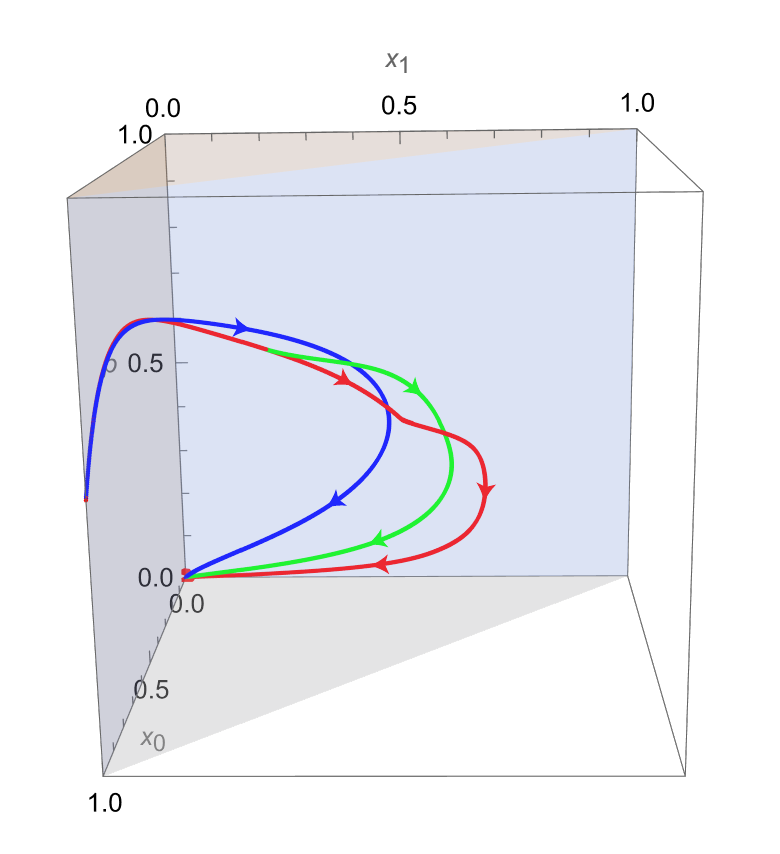}}
        \subfigure[$x_0(\xi)=0.6,$ $x_1(\xi)=0.3$ and \phantom{(ba)} $p(\xi)=0.1,$ with $\xi\in\interval{-\tau}{0}.$]{\includegraphics[scale=0.4]{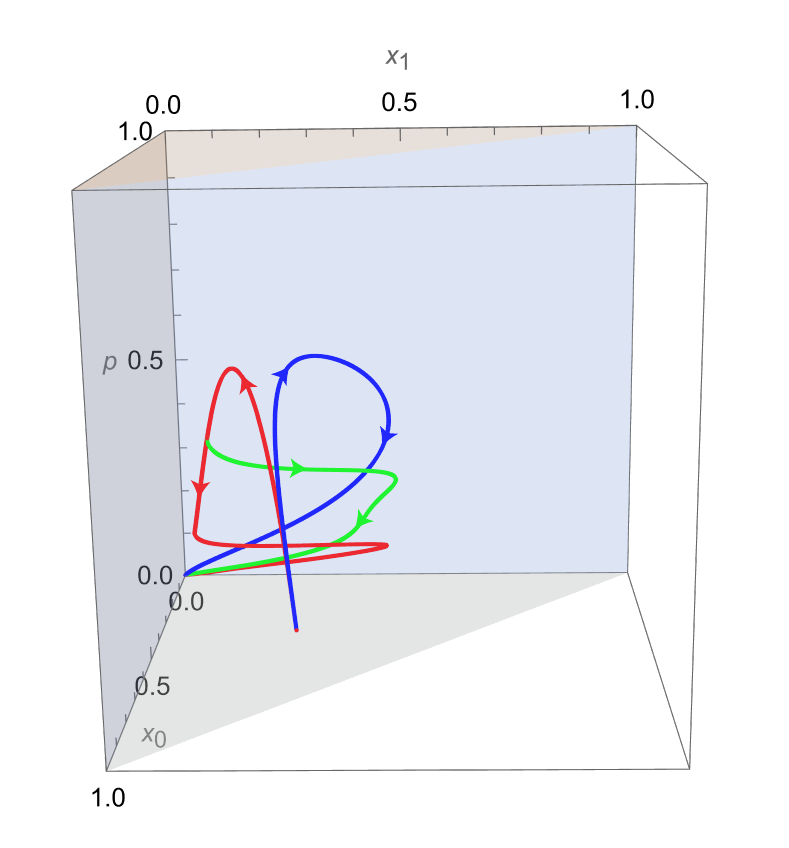}}
\end{center}
\vspace{-0.25cm}
\caption{System trajectories for three distinct initial conditions and three values of the delay parameter ($\tau = 0$, $25$, $50$; blue, green, and red curves, respectively) in the extinction regime, $\gamma\leq(2-\mu)^{-1}$, assuming sufficiently small genome degradation. The parameters considered are $\mu=0.2,$ $\gamma=0.4,$ $k=0.4$ and $\varepsilon=0.1$.}
\label{fig:5}		
\end{figure}

\begin{figure}[h!]
\captionsetup{width=\linewidth}
\begin{center}
		\subfigure[$x_0(\xi)=0.8,$ $x_1(\xi)=0.2$ and \phantom{(aa)} $p(\xi)=0,$ with $\xi\in\interval{-\tau}{0}.$]{\includegraphics[scale=0.395]{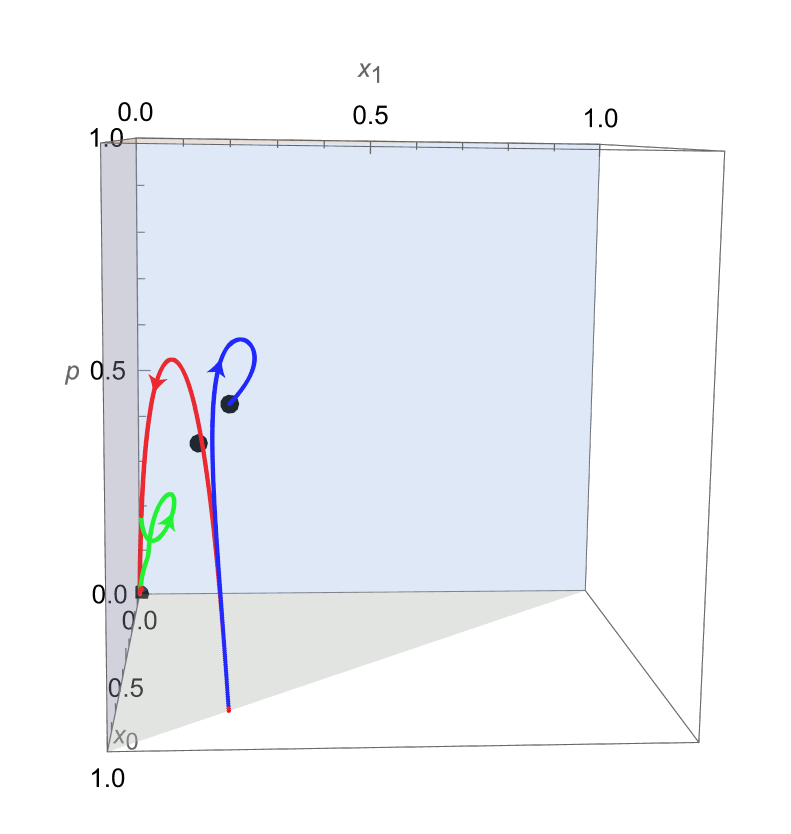}}
		\subfigure[$x_0(\xi)=0.1,$ $x_1(\xi)=0.8$ and \phantom{(aa)} $p(\xi)=0.9,$ with $\xi\in\interval{-\tau}{0}.$]{\includegraphics[scale=0.365]{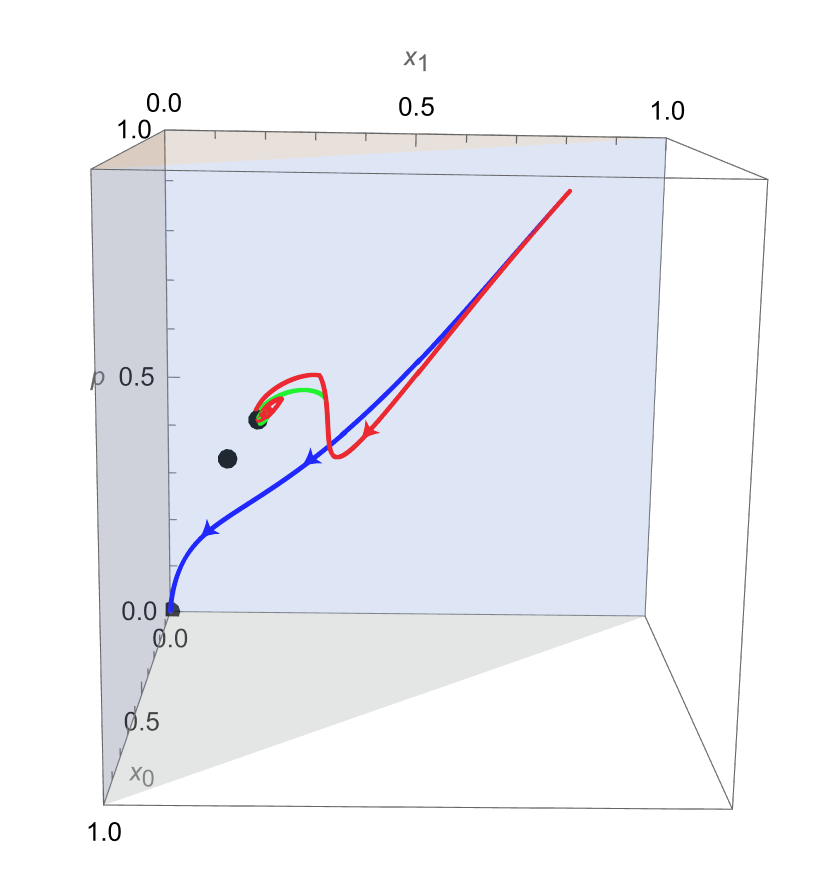}}
        \subfigure[$x_0(\xi)=0.4,$ $x_1(\xi)=0.6$ and \phantom{(aa)} $p(\xi)=0,$ with $\xi\in\interval{-\tau}{0}.$]{\includegraphics[scale=0.315]{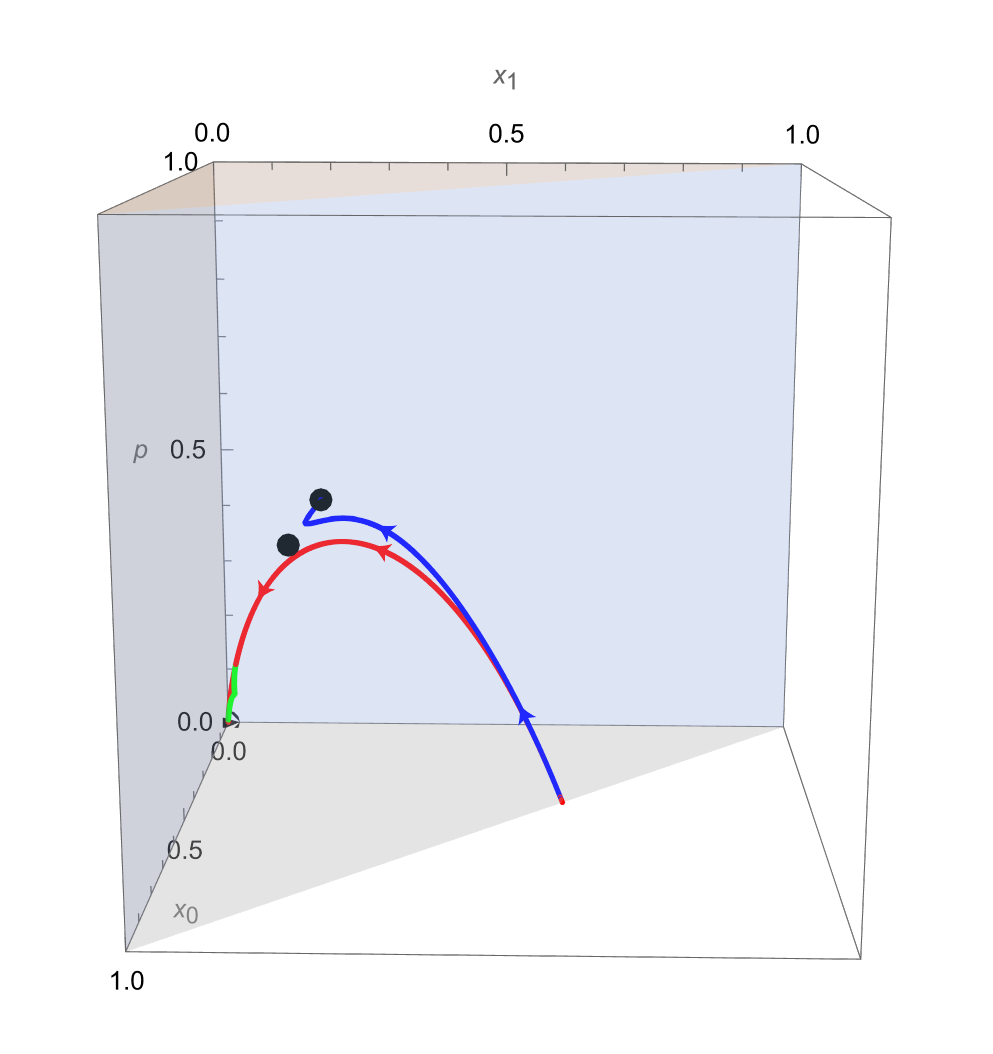}}
\end{center}
\vspace{-0.25cm}
\caption{System trajectories for three distinct initial conditions and three values of the delay parameter ($\tau = 0$, $25$, $50$; blue, green, and red curves, respectively), with parameters set just before the error threshold, $\gamma>(2-\mu)^{-1}$. In this regime, introducing a finite delay can lead trajectories to settle at entirely different equilibria, depending on both the initial condition and the value of $\tau$.}
\label{fig:6}		
\end{figure}

\subsubsection{Periodic orbits without delays}

In what follows, we set $\tau = 0$. Building on Proposition~\ref{prop:no-periodic-orbits}, it is straightforward to verify that system~\eqref{eq:DIviruses} also does not admit periodic orbits. This result is formalized in the following corollary.
\begin{Corollary}
For any $T > 0$, the system~\eqref{eq:DIviruses} does not admit any $T$-periodic solution with initial conditions in the region $\tilde{\mathcal{D}}$.
\end{Corollary}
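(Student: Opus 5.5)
We adapt the argument of Proposition~\ref{prop:no-periodic-orbits} to the three-dimensional system~\eqref{eq:DIviruses} with $\tau=0$ and $\varepsilon_p=\varepsilon$. There are two possible routes. One could regard~\eqref{eq:DIviruses} as the restriction of~\eqref{eq:fullmodel} to the set $p_1\equiv 0$; but that set is invariant only when $k_1=0$, and $k_1=0$ lies outside the parameter range of Proposition~\ref{prop:no-periodic-orbits}. The proof there never divides by $k_1$, so it could be quoted with $k_1=0$ allowed. We prefer instead to repeat the three steps directly, since each becomes shorter here.

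Suppose $(x_0(t),x_1(t),p(t))$ is a nonconstant $T$-periodic solution in $\tilde{\mathcal D}$, and set $X=x_0+x_1$.

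\emph{Case $x_0\equiv 0$.} Then $\dot p=-\varepsilon p$, and periodicity forces $p\equiv0$. It follows that $\dot x_1=-\varepsilon x_1$, so $x_1\equiv 0$ and the solution is an equilibrium.

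\emph{Case $x_0\not\equiv0$.} Since $x_0=0$ is invariant and solutions are unique, $x_0(t)>0$ for all $t$. On the face $X=1$ we have $\dot X=-\varepsilon X<0$, and on the face $p=1$ we have $\dot p=-\varepsilon<0$. A periodic function attaining its maximum value $1$ there would need zero derivative, so $X<1$ and $p<1$ along the orbit. At a zero of $p$ we would have $\dot p=k_0x_0(1-p)>0$, so $p>0$ everywhere.

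\emph{The ratio $q=x_1/x_0$.} Exactly as in Proposition~\ref{prop:no-periodic-orbits}, with $R=r_0p$, it satisfies
\[
\dot q=r_0p(1-X)\bigl[\gamma\mu+cq\bigr],\qquad c=1-2\gamma+\gamma\mu .
\]
If $c\ge0$, then $\dot q>0$, which contradicts periodicity. If $c<0$, then
\[
\frac{d}{dt}(q-q_*)=c\,r_0p(1-X)(q-q_*),\qquad q_*=-\gamma\mu/c .
\]
Integrating over one period, the factor $\exp\bigl(c\int_0^T r_0p(1-X)\,ds\bigr)$ is strictly less than $1$, because the integrand is strictly positive. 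Hence $q\equiv q_*$. This step is where the positivity of $p$ and of $1-X$ is needed.

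\emph{Reduction to the plane.} Substituting $x_1=q_*x_0$ gives a planar system in $(x,p)=(x_0,p)$:
\[
\dot x=x\bigl[Ap(1-Bx)-\varepsilon\bigr],\qquad \dot p=k_0x(1-p)-\varepsilon p,
\]
with $A=(1-\mu)\gamma r_0$ and $B=1+q_*$. This is the system of Proposition~\ref{prop:no-periodic-orbits} with $C=1$ (i.e.\ $z_*=0$); here there is no $z$ step at all. The periodic orbit lies in the simply connected open set $\{x>0,\ 0<p<1,\ Bx<1\}$. With the Dulac function $\mathcal B=1/(xp)$,
\[
\partial_x(\mathcal B\dot x)+\partial_p(\mathcal B\dot p)=-AB-\frac{k_0}{p^2}<0
\]
there, so Bendixson--Dulac excludes the orbit, a contradiction.

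\emph{Where the care is needed.} The only delicate points are the boundary arguments showing that the orbit stays in the open region. These are needed both for the strict sign in the $q$ step and for applying Dulac's criterion on an open simply connected domain. Both follow from the strict inward-pointing vector field on the faces $X=1$ and $p=1$, together with the positivity of $\dot p$ at $p=0$, just as in the full model.
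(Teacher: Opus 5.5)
Your proposal is correct and follows the paper's route. The paper only asserts that the corollary follows by building on Proposition~\ref{prop:no-periodic-orbits}, and your argument is that verification carried out for the three-variable system: the trivial $x_0\equiv 0$ case, the $q$-step, and the planar Dulac argument with $C=1$ and no $z$-step. Your remark that system~\eqref{eq:DIviruses} is the $k_1=0$ restriction of the full model to the invariant set $\{p_1=0\}$, and that the earlier proof survives $k_1=0$, is also accurate and is presumably the reduction the paper has in mind.
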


\section{Conclusions}

We have developed and analyzed a polymerase-mediated quasispecies framework in which master and mutant genomes can encode distinct functional polymerases. This formulation extends the minimal complementation model previously used to identify the \(\tau\)-tipping mechanism~\cite{TurnerPRE} and produces a richer equilibrium structure involving master-mutant coexistence, mutant-only persistence, and complete extinction. We have shown how mutation, genome-polymerase template preference, replication efficiency, and degradation jointly organize these regimes through transcritical and saddle-node bifurcations.

A central result is that the \(\tau\)-tipping mechanism identified in Ref. ~\cite{TurnerPRE} persists in this higher-dimensional and functionally more general setting. Delays do not change the equilibrium coordinates, but they reorganize the basins of attraction. Consequently, populations that persist under instantaneous polymerase activity may become extinct when functional replication machinery becomes available after a sufficiently long interval. Recovering this transition when both genome classes produce functional polymerases shows that \(\tau\)-tipping is not restricted to complete polymerase dependence or to the particular reduced architecture considered previously.

The comparison between autonomous mutants and polymerase-defective genomes reveals an additional qualitative effect. When mutant genomes encode functional polymerases, loss of the master sequence can leave a viable mutant population. By contrast, when mutants depend on polymerase supplied in trans by the master genome, loss of the master also removes the source of replication machinery and converts master-sequence loss into extinction of the entire population governed by a saddle-node bifurcation. Thus, the functional organization of the mutant spectrum determines whether a quasispecies transition produces evolutionary replacement or population collapse.

These results suggest several virologically testable predictions. Viral persistence should depend not only on the amount of polymerase eventually produced, but also on how rapidly functional replication complexes are established relative to genome degradation. Perturbations that delay polyprotein processing, cofactor recruitment, membrane association, or replicase assembly may therefore have effects disproportionate to their influence on the final abundance or catalytic activity of polymerase. In multistable regimes, outcomes should also depend on the initial ratio of replication-competent genomes, defective genomes, and functional replication machinery.

The model is deliberately minimal and should not be interpreted as a quantitative representation of a particular virus. The delay is an effective representation of several intracellular processes, the mutant spectrum is aggregated into a single class, and stochasticity, spatial compartmentalization, innate immunity, packaging, and cell-to-cell transmission are omitted. Future work should test whether $\tau$-tipping persists under distributed delays, explicit maturation compartments, stochastic founding populations, cis-preferential replication, experimentally measured DVG distributions, and virus-specific replication kinetics.

Despite its deliberately minimal formulation, the model generates several experimentally testable predictions. First, equal endpoint polymerase abundance need not produce equal viral yields: delaying polymerase activation may drive the population to extinction even if polymerase levels subsequently reach those observed during a persistent infection. Second, this effect should be strongly nonlinear, because near the basin boundary a modest increase in the time required for replicase activation can cause an abrupt transition from persistence to extinction. Third, delayed replicase formation should interact synergistically with viral RNA degradation, such that RNA-destabilizing treatments strongly enhance the effect of a given delay. Fourth, inocula enriched in defective viral genomes should be particularly delay-sensitive when these genomes depend on helper-derived polymerase. Early complementation may therefore rescue the population, whereas the same amount of polymerase supplied later may fail to do so. More generally, infection outcome should be history-dependent: changing the temporal order of genome delivery and polymerase availability may alter persistence even when the final molecular composition is identical. These predictions translate the mathematical mechanism of basin reorganization into experiments involving pulse delivery, inducible polymerase-expression systems, protease or polymerase-cofactor inhibitors, RNA-destabilizing treatments, and controlled mixtures of defective and helper genomes. Such experiments could establish whether the timing of functional replicase availability constitutes an independently controllable determinant of viral persistence.

\appendix
\section{Technical proofs}

\subsection{Proof of Proposition 2}

Observe that the second component of the equilibrium $\mathbf{x}^*_2$, denoted by $x_{12}$, can be written as 
$$\frac{k_1((1-\gamma)r_1-\varepsilon)+\sqrt{k_1\left(k_1((1-\gamma)r_1-\varepsilon)^2 -4(1-\gamma)r_1\varepsilon^2\right)}}{2r_1k_1(1-\gamma)},$$
while the component $p_{12}$ is given by
$$\frac{k_1((1-\gamma)r_1+\varepsilon)
+\sqrt{k_1\left(k_1((1-\gamma)r_1-\varepsilon)^2
-4(1-\gamma)r_1\varepsilon^2\right)}}
{2r_1(k_1+\varepsilon)(1-\gamma)}.$$
Note that both expressions contain the same radical term. We first show that this quantity is real. Define
\begin{align*}
\mathcal N
&=k_1\left(k_1((1-\gamma)r_1-\varepsilon)^2-4(1-\gamma)r_1\varepsilon^2\right)\\
&=k_1\Big([k_1-4(1-\gamma)r_1]\varepsilon^2
-2k_1(1-\gamma)r_1\varepsilon
+k_1(1-\gamma)^2r_1^2\Big)\\
&=k_1\big(k_1-4(1-\gamma)r_1\big)
\left(\varepsilon-\frac{(1-\gamma)\sqrt{k_1}r_1}
{\sqrt{k_1}-2\sqrt{(1-\gamma)r_1}}\right)
\left(\varepsilon-\frac{(1-\gamma)\sqrt{k_1}r_1}
{\sqrt{k_1}+2\sqrt{(1-\gamma)r_1}}\right).
\end{align*}
We analyze the three possible cases:\\

\noindent (1) $k_1-4(1-\gamma)r_1>0,$ $\varepsilon-\frac{(1-\gamma)\sqrt{k_1}r_1}{\sqrt{k_1}-2\sqrt{(1-\gamma)r_1}}<0$ and $\varepsilon-\frac{(1-\gamma)\sqrt{k_1}r_1}{\sqrt{k_1}+2\sqrt{(1-\gamma)r_1}}\le0.$ Hence, $\mathcal N\geq 0.$\\

\noindent (2) $k_1 - 4(1-\gamma) r_1<0,$ $
\varepsilon - \frac{(1-\gamma)\sqrt{k_1}r_1}{ \sqrt{k_1}  - 2\sqrt{(1-\gamma)r_1}} >0$ and $\varepsilon-\frac{(1-\gamma)\sqrt{k_1}r_1}{ \sqrt{k_1}+2\sqrt{(1-\gamma)r_1}}\leq0.$ Hence, $\mathcal{N}\geq0.$\\

\noindent (3) $k_1 - 4(1-\gamma) r_1=0.$ Hence, $\mathcal N=-2k_1(1-\gamma)r_1\varepsilon +k_1(1-\gamma)^2r_1^2 =k_1(1-\gamma)r_1\big(r_1(1-\gamma)-2\varepsilon\big)\ge0.$\\

\noindent Therefore, in all cases the coordinates of $\mathbf{x}_2^*$ are real. The same argument applies to $\mathbf{x}_3^*$. Next we prove that $\mathbf{x}_2^*$ and $\mathbf{x}_3^*$ lie in the bounded region $\mathcal D$.  
In cases (1) and (2) we have
$$\varepsilon \le\frac{\sqrt{k_1}r_1(1-\gamma)}{\sqrt{k_1}+2\sqrt{r_1(1-\gamma)}}<r_1(1-\gamma),$$
and in case (3) the same inequality follows directly, $\varepsilon<r_1(1-\gamma).$ Using this bound we obtain
\begin{align*}
0&<
\frac{k_1((1-\gamma)r_1-\varepsilon)
-\sqrt{(k_1((1-\gamma)r_1-\varepsilon))^2
-4(1-\gamma)k_1r_1\varepsilon^2}}
{2r_1k_1(1-\gamma)}\\
&<
\frac{k_1((1-\gamma)r_1-\varepsilon)
+\sqrt{(k_1((1-\gamma)r_1-\varepsilon))^2}}
{2r_1k_1(1-\gamma)}
<1.
\end{align*}
Similarly,
\begin{align*}
0&<
\frac{k_1((1-\gamma)r_1+\varepsilon)
-\sqrt{(k_1((1-\gamma)r_1-\varepsilon))^2
-4(1-\gamma)r_1\varepsilon^2}}
{2r_1(k_1+\varepsilon)(1-\gamma)}\\
&<
\frac{k_1((1-\gamma)r_1+\varepsilon)
+\sqrt{(k_1((1-\gamma)r_1-\varepsilon))^2}}
{2r_1(k_1+\varepsilon)(1-\gamma)}\\
&=
\frac{k_1}{k_1+\varepsilon}
<1 .
\end{align*}
Hence all coordinates of $\mathbf{x}_2^*$ and $\mathbf{x}_3^*$ belong to the interval $(0,1)$. Consequently, both equilibria lie in the bounded region $\mathcal D$. 
\hspace*{\fill}$\square$\bigskip\noindent

\subsection{Proof of Proposition 3}

Since the coordinates of $\mathbf{x}_4^*$ and $\mathbf{x}_5^*$ involve a common radical term, we first verify that this quantity is real. Indeed, the expression under the square root can be written as
      {\small\begin{align*}
      \mathrm{N} =\left[ \left(P_1 + 2\sqrt{(1-2\gamma ) \gamma(1-\mu ) Q_0}\right) \varepsilon - \gamma(1-\mu ) Q_0 \right] \left[ \left(P_1 - 2\sqrt{(1-2\gamma ) \gamma(1-\mu ) Q_0}\right) \varepsilon - \gamma(1-\mu ) Q_0 \right],
  \end{align*}  }
where, $P_1 = k_0(1 - 2\gamma + \gamma\mu) - k_1 \gamma \mu$, $Q_0= k_0(1 - 2\gamma + \gamma\mu) - k_1 r_1 \gamma \mu$. Assuming that
$$ k_1 r_1\leq k_0\quad\quad   \mu\leq 2-\frac{1}{\gamma},$$
we obtain
$k_0/\gamma\leq 2k_0-\mu k_0< 2k_0-\mu k_0+\mu k_1 r_1.$ Consequently,
$$  \gamma> \dfrac{k_0}{2k_0-\mu(k_0-k_1 r_1)}; \qquad\alpha-\beta<0\qquad \alpha-\frac{\beta}{r_1}<0.$$
In particular, this implies that
$1>\gamma>\frac12,$ $Q_0\leq 0,$ $P_1<0.$
We now proceed to analyze the sign of $\mathrm{N}$ case by case.\\

\noindent (1) Suppose that $P_1 + 2\sqrt{(1-2\gamma)\gamma(1-\mu)Q_0} \ge 0.$ Then
$$(P_1 + 2\sqrt{(1-2\gamma)\gamma(1-\mu)Q_0})\varepsilon
- \gamma(1-\mu)Q_0 \ge 0,$$ and by hypothesis
$(P_1 - 2\sqrt{(1-2\gamma)\gamma(1-\mu)Q_0})\varepsilon- \gamma(1-\mu)Q_0 \ge 0.$
Consequently, $\mathrm{N} \ge 0$.\\

\noindent (2) Suppose that $P_1 + 2\sqrt{(1-2\gamma)\gamma(1-\mu)Q_0} < 0.$ By hypothesis,
\begin{equation}\nonumber
\varepsilon \le
\frac{-\gamma(1-\mu)Q_0}
{-P_1 + 2\sqrt{(1-2\gamma)\gamma(1-\mu)Q_0}}
<
\frac{-\gamma(1-\mu)Q_0}
{-P_1 - 2\sqrt{(1-2\gamma)\gamma(1-\mu)Q_0}}.
\end{equation}
Hence,
$(P_1 + 2\sqrt{(1-2\gamma)\gamma(1-\mu)Q_0})\varepsilon
- \gamma(1-\mu)Q_0 \ge 0$ and again by hypothesis,
\begin{equation}\nonumber
\left(P_1 - 2\sqrt{(1-2\gamma)\gamma(1-\mu)Q_0}\right)\varepsilon
- \gamma(1-\mu)Q_0 \ge 0.
\end{equation}
Thus $N\ge 0$ also in this case. Therefore the radical term is real, and both equilibria are well defined.
Next, let us consider \eqref{eq:simpParmetros}. By the previous assumptions,
$$\alpha\le 0,\qquad \alpha-\beta<0,\qquad \alpha-\frac{\beta}{r_1}<0,\qquad\delta<0.$$
Furthermore, from the bound on $\varepsilon$ we obtain
\begin{align*}
\varepsilon
&\le
\dfrac{\gamma(1-\mu)\big(k_1\gamma\mu r_1-k_0(1-2\gamma+\gamma\mu)\big)}
{k_1\gamma\mu-k_0(1-2\gamma+\gamma\mu)
+2\sqrt{(1-2\gamma)\gamma(1-\mu)\big(k_0(1-2\gamma+\gamma\mu)-k_1r_1\gamma\mu\big)}}\\
&\le
\dfrac{\gamma(1-\mu)\big(k_1\gamma\mu r_1-k_0(1-2\gamma+\gamma\mu)\big)}
{k_1\gamma\mu-k_0(1-2\gamma+\gamma\mu)}\\
&=
\frac{\delta(\alpha-\beta)}{\beta/r_1-\alpha}.
\end{align*}
Hence $\varepsilon(\beta/r_1-\alpha)\le \delta(\alpha-\beta),$ which is equivalent to
\begin{equation}\nonumber
   \delta(\alpha-\beta)+\varepsilon\left(\alpha-\frac{\beta}{r_1}\right)\ge 0. 
\end{equation}

We now show that each coordinate of $\mathbf{x}_4^*$ and $\mathbf{x}_5^*$ belongs to $[0,1]$. For $x_{0(4,5)}$, the previous inequality implies that the numerator is nonnegative, and since $k_0\delta(4\gamma-2)(\alpha-\beta)>0,$ we obtain $x_{0(4,5)}\ge 0.$ Moreover,
\begin{align*}
x_{0(4,5)}
&\le
\frac{\alpha\left[-\left(\delta(\alpha-\beta)+\varepsilon(\alpha-\beta/r_1)\right)
-\sqrt{\left(\delta(\alpha-\beta)+\varepsilon(\alpha-\beta/r_1)\right)^2-4\delta\varepsilon^2(2\gamma-1)(\alpha-\beta)}\right]}
{k_0\delta(4\gamma-2)(\alpha-\beta)}.
\end{align*}
Then, 
\begin{align*}
x_{0(4,5)}&<
\frac{2\alpha\left[-\left(\delta(\alpha-\beta)+\varepsilon(\alpha-\beta/r_1)\right)\right]}
{k_0\delta(4\gamma-2)(\alpha-\beta)}\\
&<
\frac{2\alpha\left[-(\delta+\varepsilon)(\alpha-\beta)\right]}
{k_0\delta(4\gamma-2)(\alpha-\beta)}\\
&=
\frac{2((2-\mu)\gamma-1)(\delta+\varepsilon)}{\delta(4\gamma-2)}
<
\frac{2(2\gamma-1)}{4\gamma-2}
=1.
\end{align*}
Therefore,$0\le x_{0(4,5)}<1.$
A similar argument applies to $x_{1(4,5)}$. Indeed, since
\[
0<
\delta(\alpha-\beta)+\varepsilon\left(\alpha-\frac{\beta}{r_1}\right)
<
\delta(\alpha-\beta)+\varepsilon\left(\alpha+\frac{\beta}{r_1}\right),
\]
we obtain $x_{1(4,5)}\ge 0,$ and
\begin{align*}
x_{1(4,5)}
&\le
\frac{\mu\left[\delta(\alpha-\beta)+\varepsilon(\alpha+\beta/r_1)
+\sqrt{\left(\delta(\alpha-\beta)+\varepsilon(\alpha-\beta/r_1)\right)^2-4\delta\varepsilon^2(2\gamma-1)(\alpha-\beta)}\right]}
{(4\gamma-2)(\mu-1)(\alpha-\beta)}\\
&\le
\frac{2\mu\big(\delta(\alpha-\beta)+\varepsilon\alpha\big)}
{(4\gamma-2)(\mu-1)(\alpha-\beta)}
\le
\frac{2\mu\delta(\alpha-\beta)}
{(4\gamma-2)(\mu-1)(\alpha-\beta)}\\
&=
\frac{2\mu\gamma}{4\gamma-2}
\le 1,
\end{align*}
where in the last inequality we used $\mu\le 2-\frac1\gamma$. Hence $0\le x_{1(4,5)}\le 1.$ For $p_{0(4,5)}$, using again the same sign information, we find $p_{0(4,5)}\ge 0,$ and
\begin{align*}
p_{0(4,5)}
&\le
\frac{\alpha\left[\delta(\alpha-\beta)-\varepsilon(\alpha-\beta/r_1)
+\sqrt{\left(\delta(\alpha-\beta)+\varepsilon(\alpha-\beta/r_1)\right)^2-4\delta\varepsilon^2(2\gamma-1)(\alpha-\beta)}\right]}
{2\delta(\alpha+(1-2\gamma)\varepsilon-\gamma k_1\mu)(\alpha-\beta)}\\
&\le
\frac{2\alpha\delta(\alpha-\beta)}
{2\delta(\alpha+(1-2\gamma)\varepsilon-\gamma k_1\mu)(\alpha-\beta)}\\
&=
\frac{\alpha}{\alpha+(1-2\gamma)\varepsilon-\gamma k_1\mu}
<1.
\end{align*}
Thus $0\le p_{0(4,5)}<1.$ Likewise, for $p_{1(4,5)}$ we have $p_{1(4,5)}\ge 0,$ and
\begin{align*}
p_{1(4,5)}
&\le
\frac{k_1\mu\left[(-\delta(\alpha-\beta)+\varepsilon(\alpha-\beta/r_1))
-\sqrt{\left(\delta(\alpha-\beta)+\varepsilon(\alpha-\beta/r_1)\right)^2-4\delta\varepsilon^2(2\gamma-1)(\alpha-\beta)}\right]}
{2(\mu-1)(\alpha+(1-2\gamma)\varepsilon-\gamma k_1\mu)(\alpha-\beta)}\\
&\le
\frac{2k_1\mu(-\delta)(\alpha-\beta)}
{2(\mu-1)(\alpha+(1-2\gamma)\varepsilon-\gamma k_1\mu)(\alpha-\beta)}\\
&=
\frac{-\gamma k_1\mu}{\alpha+(1-2\gamma)\varepsilon-\gamma k_1\mu}
<1.
\end{align*}
Hence $0\le p_{1(4,5)}<1.$
It remains to verify the simplex constraints. Using the explicit formulas and the estimates above, we obtain
\begin{align*}
0&<x_{0(4)}+x_{1(4)}\\
&\le
(k_0\gamma\mu-\alpha)
\frac{\delta(\alpha-\beta)+\varepsilon(\alpha+\beta/r_1)
+\sqrt{\left(\delta(\alpha-\beta)+\varepsilon(\alpha-\beta/r_1)\right)^2-4\delta\varepsilon^2(2\gamma-1)(\alpha-\beta)}}
{k_0\gamma(4\gamma-2)(\mu-1)(\alpha-\beta)}\\
&\le 1,
\end{align*}
and similarly
\begin{align*}
0&<p_{0(4)}+p_{1(4)}\\
&\le
(k_1\gamma\mu-\alpha)
\frac{(-\delta(\alpha-\beta)+\varepsilon(\alpha-\beta/r_1))
-\sqrt{\left(\delta(\alpha-\beta)+\varepsilon(\alpha-\beta/r_1)\right)^2-4\delta\varepsilon^2(2\gamma-1)(\alpha-\beta)}}
{2\gamma(\mu-1)(\alpha+(1-2\gamma)\varepsilon-\gamma k_1\mu)(\alpha-\beta)}\\
&\le 1.
\end{align*}
The same estimates hold for $\mathbf{x}_5^*$, the only difference being the choice of sign in front of the common radical term. Therefore, $\mathbf{x}_4^*,\mathbf{x}_5^*\in\mathcal D.$
\hspace*{\fill}$\square$\bigskip\noindent

\section{Proof of Proposition 6}

Since the coordinates of $\mathbf{x}_2^*$ and $\mathbf{x}_3^*$ share a common radical term, we first verify that this expression is real. It is therefore sufficient to examine the quantity under the square root, which can be written as
\begin{align*}
\mathcal{N}   &= \alpha \left[\alpha - 4 r_0 (1 - 2\gamma) \Delta \right] \varepsilon^2    -2 \alpha^2 \Delta r_0 \varepsilon + \alpha^2 \Delta^2 r_0^2 
\end{align*} 
with, $\alpha =k_0( 1 +\gamma (\mu-2))<0$ and $\Delta = \gamma( 1-\mu )>0.$
A direct factorization of the quadratic polynomial in $\varepsilon$ yields
\begin{align*}
\mathcal{N}
&= \alpha \left[ \alpha - 4\Delta r_0 (1 - 2\gamma) \right]
(\varepsilon - \varepsilon_-)(\varepsilon - \varepsilon_+),\qquad \varepsilon_{\pm}
= \frac{-\Delta r_0\alpha}{-\alpha \pm 2\sqrt{\alpha \Delta r_0 (1 - 2\gamma)}}.
\end{align*}
The sign of the denominator in the expressions of $\varepsilon_{\pm}$ leads to the following cases:
\begin{itemize}
    \item If $ -\alpha - 2  \sqrt{    \alpha \Delta r_0 (1 - 2\gamma) }>0$, then, under the assumptions on $\varepsilon$, we have $\varepsilon<\varepsilon_+<\varepsilon_-.$ 
    Consequently,
    $\alpha \left[ \alpha - 4\Delta r_0 (1 - 2\gamma) \right]>0, \quad  \left( \varepsilon - \varepsilon_- \right)<0,\quad \left( \varepsilon - \varepsilon_+ \right)<0.$
    Hence, $\mathcal{N} >0$.
    \item If $ -\alpha - 2  \sqrt{    \alpha \Delta r_0 (1 - 2\gamma) }<0$, then, under the assumptions on $\varepsilon$, it follows that the following inequalities hold: 
    $\alpha \left[ \alpha - 4\Delta r_0 (1 - 2\gamma) \right]<0, \quad  \left( \varepsilon - \varepsilon_- \right)>0,\quad \left( \varepsilon - \varepsilon_+ \right)<0.$
    Hence, $\mathcal{N} >0$.
    \item If $ -\alpha - 2  \sqrt{    \alpha \Delta r_0 (1 - 2\gamma) }=0$, then from the hypotheses it follows that $\varepsilon \leq \frac{\Delta r_0}{2}$, and consequently $\mathcal{N} = -2 \alpha^2 \Delta r_0 \varepsilon + \alpha^2 \Delta^2 r_0^2 \geq 0.$
  \end{itemize}
Under the stated assumptions, the equilibrium coordinates are
    real. It remains to show that each coordinate is positive. In particular, the hypotheses imply that
$$\varepsilon\leq  \Delta r_0,\quad\gamma  (\mu -2)+1<0,\quad \gamma>\frac{1}{2}.$$
Furthermore, using the assumptions stated above, we proceed to estimate each coordinate of the equilibria.
We begin with the equilibrium $\mathbf{x}_2^*$. For its first coordinate, by rationalizing the denominator we obtain
\begin{align*}
0 < x_{20}
&= \frac{-2\varepsilon^2\alpha}{k_0\big(\alpha(\varepsilon-\Delta r_0) + \sqrt{(\alpha(\varepsilon-\Delta r_0))^2 + 4\alpha \Delta r_0 \varepsilon^2 (2\gamma -1)}\big)} \\
&= \frac{-2\varepsilon^2\alpha\big(\alpha(\varepsilon-\Delta r_0) - \sqrt{(\alpha(\varepsilon-\Delta r_0))^2 + 4\alpha \Delta r_0 \varepsilon^2 (2\gamma -1)}\big)}{-4\alpha \Delta r_0 \varepsilon^2 (2\gamma -1)k_0}.
\end{align*}
Using that $\alpha<0$ and $\varepsilon \leq \Delta r_0$, we estimate
\begin{align*}
x_{20}
&\leq \frac{2\varepsilon^2\alpha\big(\alpha(\Delta r_0-\varepsilon)\big)}{-4\alpha \Delta r_0 \varepsilon^2 (2\gamma -1)k_0}
\leq\frac{\Delta r_0 - \varepsilon}{2\Delta r_0}
\leq \frac{1}{2}.
\end{align*}
A similar argument yields for the second coordinate
\begin{align*}
0 < x_{21}
&= \frac{2\varepsilon^2\gamma\mu}{\alpha(\varepsilon-\Delta r_0) + \sqrt{(\alpha(\varepsilon-\Delta r_0))^2 + 4\alpha \Delta r_0 \varepsilon^2 (2\gamma -1)}} \\
&\leq \frac{\gamma\mu(\Delta r_0-\varepsilon)}{2\Delta r_0(2\gamma-1)}
< \frac{\Delta r_0-\varepsilon}{2\Delta r_0}
\leq \frac{1}{2}.
\end{align*}

For the third coordinate, we obtain
\begin{align*}
0 < p_2
&= \frac{-2\varepsilon\alpha}{\alpha(-\varepsilon-\Delta r_0) + \sqrt{(\alpha(\varepsilon-\Delta r_0))^2 + 4\alpha \Delta r_0 \varepsilon^2 (2\gamma -1)}} \\
&< \frac{-2\varepsilon\alpha}{\alpha(-\varepsilon-\Delta r_0)}
= \frac{2\varepsilon}{\Delta r_0+\varepsilon}
\leq 1.
\end{align*}

Therefore,
\[
0 < x_{20},\, x_{21} < \frac{1}{2}, \qquad 0 < p_2 \leq 1,
\quad \text{and} \quad x_{20}+x_{21}<1.
\]

We now consider the equilibrium $\mathbf{x}_3^*$. Proceeding analogously, we obtain for the first coordinate
\begin{align*}
0 < x_{30}
&= \frac{-2\varepsilon^2\alpha}{k_0\big(\alpha(\varepsilon-\Delta r_0) - \sqrt{(\alpha(\varepsilon-\Delta r_0))^2 + 4\alpha \Delta r_0 \varepsilon^2 (2\gamma -1)}\big)} \\
&= \frac{-2\varepsilon^2\alpha\big(\alpha(\varepsilon-\Delta r_0) + \sqrt{(\alpha(\varepsilon-\Delta r_0))^2 + 4\alpha \Delta r_0 \varepsilon^2 (2\gamma -1)}\big)}{-4\alpha \Delta r_0 \varepsilon^2 (2\gamma -1)k_0} \\
&\leq \frac{   4\varepsilon^2\alpha\left( \alpha( \Delta r_0-\varepsilon) \right)}{- 4\alpha \Delta r_0 \varepsilon^2   (2\gamma -1)  k_0 }<\frac{\Delta r_0-\varepsilon}{\Delta r_0}
\leq 1.
\end{align*}

Similarly, for the second coordinate we obtain
\begin{align*}
0 < x_{31}
&= \frac{2\varepsilon^2\gamma\mu}{\alpha(\varepsilon-\Delta r_0) - \sqrt{(\alpha(\varepsilon-\Delta r_0))^2 + 4\alpha \Delta r_0 \varepsilon^2 (2\gamma -1)}} \\
&\leq \frac{\gamma\mu(\Delta r_0-\varepsilon)}{\Delta r_0(2\gamma-1)}
< \frac{\Delta r_0-\varepsilon}{\Delta r_0}
\leq 1.
\end{align*}

Finally, for the third coordinate we obtain
\begin{align*}
0 < p_3
&= \frac{-2\varepsilon\alpha}{\alpha(-\varepsilon-\Delta r_0) - \sqrt{(\alpha(\varepsilon-\Delta r_0))^2 + 4\alpha \Delta r_0 \varepsilon^2 (2\gamma -1)}} \\
&\leq \frac{  -2\varepsilon \alpha  ( \alpha(-\varepsilon-\Delta r_0) + \alpha(\varepsilon-\Delta r_0)   )}{ -4\alpha \Delta r_0 \varepsilon^2    (2\gamma -1) }\\
&= \frac{-\alpha}{2\gamma-1}
= \frac{2\gamma-1-\gamma\mu}{2\gamma-1}
< 1.
\end{align*}

Consequently,
$$0 < x_{30},\, x_{31} < 1, \qquad 0 < p_3 < 1,
\quad \text{and} \quad x_{30}+x_{31}\leq \frac{   4\varepsilon^2(2\gamma-1)    ( \alpha( \Delta r_0-\varepsilon)  )}{ 4\alpha \Delta r_0 \varepsilon^2    (2\gamma -1)}<1.$$

This proves that all coordinates of $\mathbf{x}_2^*$ and $\mathbf{x}_3^*$ are positive and satisfy the defining constraints of the region $\tilde{\mathcal D}$. Hence, both equilibria lie in $\tilde{\mathcal D}$.
\hspace*{\fill}$\square$\bigskip\noindent

\section*{Acknowledgments}
\noindent We want to thank J. Tom\'as L\'azaro for the careful reading of this manuscript and for his comments and suggestions.

\section*{Statements and Declarations}

\noindent\textbf{Funding:}  
ET was funded by the Chilean research agency ANID through the doctoral fellowship ANID/PhD/2021–21210522, and also received partial support from the Research Group in Ordinary Differential Equations and Applications (GIEDOAp), GI2310532-VRIP-UBB. SFE was supported by grants PID2025-169610NB-I00 (funded by Spain MCIU/AEI/10.13039/501100011033 and by “ERDF a way of making Europe”) and CIPROM/2022/59 (funded by Generalitat Valenciana). This work was also supported by the Spanish State Research Agency (AEI), through the Severo Ochoa and Mar\'ia de Maeztu Program for Centers and Units of Excellence in RD (CEX2020-001084-M). We thank CERCA Programme/Generalitat de Catalunya for institutional support. \\

\noindent\textbf{Author Contributions:}  
Model design: JS, SFE; Mathematical analyses: ET, FC, NM, JS; Conceptualization, analysis, and writing of the manuscript: all authors.\\

\noindent\textbf{Data Availability:}  
No datasets were generated or analysed during the current study.\\

\noindent\textbf{Ethics Approval:}  
This article does not contain any studies with human participants or animals performed by any of the authors.

\bibliographystyle{unsrt} %Referencias por orden de aparición
\bibliography{Bib_Delay_Quasispecies}

\end{document}